\let\Algorithm\algorithm
\renewcommand\algorithm[1][]{\Algorithm[#1]\setstretch{1.1}}
\newtheorem{thr}{Theorem}
\newtheorem{prop}[thr]{Proposition}
\newtheorem{remk}[thr]{Remark}
\newtheorem{defn}[thr]{Definition}
\title{Minimizing the Age of Information in Wireless Networks with Stochastic Arrivals}
\author{Igor Kadota}
\affiliation{%
  \institution{Massachusetts Institute of Technology}
}
\email{kadota@mit.edu}
\author{Eytan Modiano}
\affiliation{%
  \institution{Massachusetts Institute of Technology}
}
\email{modiano@mit.edu}
\begin{document}

\begin{abstract}
We consider a wireless network with a base station serving multiple traffic streams to different destinations. Packets from each stream arrive to the base station according to a stochastic process and are enqueued in a separate (per stream) queue. The queueing discipline controls which packet within each queue is available for transmission. The base station decides, at every time t, which stream to serve to the corresponding destination. The goal of scheduling decisions is to keep the information at the destinations fresh. Information freshness is captured by the Age of Information (AoI) metric.

In this paper, we derive a lower bound on the AoI performance achievable by any given network operating under any queueing discipline. Then, we consider three common queueing disciplines and develop both an Optimal Stationary Randomized policy and a Max-Weight policy under each discipline. Our approach allows us to evaluate the combined impact of the stochastic arrivals, queueing discipline and scheduling policy on AoI. We evaluate the AoI performance both analytically and using simulations. Numerical results show that the performance of the Max-Weight policy is close to the analytical lower bound.
\end{abstract}

\keywords{Age of Information, Scheduling, Wireless Networks, Optimization}

\maketitle


\section{INTRODUCTION}\label{sec.Intro}
Traditionally, networks have been designed to maximize throughput and minimize packet latency. With the emergence of new types of networks such as vehicular networks, UAV networks and sensor networks, other performance requirements are increasingly relevant. In particular, the Age of Information (AoI) is a performance metric that was recently proposed in \cite{AoI_update,AoI_VANET} and has been receiving attention in the literature \cite{AoI_update,AoI_path,AoI_management,AoI_MG1,AoI_multiple,AoI_errors,AoI_gamma,AoI_nonlinear,Inoue17,AoI_energy15,AoI_energy17,UpdateOrWait17,AoI_lazy,igor16,AoI_scheduling,PAoI_scheduling,igorINFOCOM,Rajat_allerton,AoI_sync,Rajat17,YuPin18,YuPin17,AoI_cache,igorTON18,Vishrant17,AoI_multiaccess,AoI_discipline,AoI_LGFS17,RajatISIT18,YinSun18,BinLi18} for its application in communication systems that carry time-sensitive data. 
\emph{The AoI captures how fresh the information is from the perspective of the destination.}

Consider a system in which packets are time-stamped upon arrival. Naturally, the higher the time-stamp of a packet, the fresher its information. Let $\tau^D(t)$ be the time-stamp of the \emph{freshest packet received by the destination} by time $t$. Then, the AoI is defined as $h(t):=t-\tau^D(t)$. The AoI measures the time that elapsed since the generation of the freshest packet received by the destination. The value of $h(t)$ increases linearly over time while no fresher packet is received, representing the information getting older. At the moment a fresher packet is received, the time-stamp at the destination $\tau^D(t)$ is updated and the AoI is reduced. 

In this paper, we study a wireless network with a Base Station (BS) serving multiple traffic streams to different destinations over unreliable channels, as illustrated in Fig.~\ref{fig.Network}. Packets from each stream arrive to the BS according to a stochastic process and are enqueued in a separate (per stream) queue. The queueing discipline controls which packet within each queue is available for transmission. The BS decides, at every time $t$, which stream to serve to the corresponding destination. Our goal is to develop scheduling policies that keep the information fresh at every destination, i.e. that minimize the average AoI in the network.

In \cite{igor16}, it was shown that when the BS always has fresh packets available for transmission, the optimal scheduling policy serves the stream associated with the largest AoI. This policy is optimal\footnote{This policy was shown to minimize the average AoI of symmetric networks, i.e. networks in which all destinations have identical features.} for it gives the largest reduction in AoI over all streams. However, when packet arrivals are random, the BS may not have a fresh packet available for every stream. Thus, a scheduling policy must account both for the AoI at the destinations and the time-stamps of the packets available for transmission in each queue. For example, consider a simple network with two streams and two destinations. Assume that at time $t$, each stream has a single packet in its queue. The packet from stream $1$ was generated $30$ msecs ago and the packet from stream $2$ was generated $10$ msecs ago. Assume that the current AoI at destinations $1$ and $2$ are $h_1(t)=50$ msecs and $h_2(t)=40$ msecs, respectively. A policy that serves the stream associated with the largest AoI would select stream $1$ and yield an AoI reduction of $50-30=20$ msecs. Alternatively, serving stream $2$ would result in a reduction of $40-10=30$ msecs. Hence, to minimize the average AoI, it is optimal to schedule stream $2$. 
In this simple example, the optimal scheduling decision was easily determined. In general, designing a transmission scheduling policy that keeps information fresh over time is a challenging task that needs to take into account the packet arrival process, the queueing discipline, and the conditions of the wireless channels. 

In recent years, the problem of minimizing the AoI has been addressed in a variety of contexts. Queueing Theory is used in \cite{AoI_update,AoI_multiple,AoI_MG1,AoI_management,AoI_path,AoI_errors,AoI_gamma,AoI_nonlinear} for finding the optimal server utilization with respect to AoI. The authors in \cite{AoI_energy15,AoI_energy17,AoI_lazy,UpdateOrWait17} consider the problem of optimizing the times in which packets are generated at the source in networks with energy-harvesting or maximum update frequency constraints. Applications of AoI are studied in \cite{AoI_VANET,AoI_buffer,AoI_emulation,AoI_LUPMAC,BaiochiAoI}. Link scheduling optimization with respect to AoI has been recently considered in \cite{AoI_LGFS16,igor16,AoI_scheduling,PAoI_scheduling,igorINFOCOM,Rajat_allerton,RajatSPAWC,BrownSPAWC,AoI_sync,Rajat17,YuPin18,YuPin17,AoI_cache,igorTON18,Vishrant17,AoI_multiaccess,AoI_discipline,AoI_LGFS17,RajatISIT18,YinSun18,BinLi18}. 
Next, we describe the mentioned related work on link scheduling optimization.

The authors in \cite{AoI_LGFS17,BrownSPAWC,Rajat_allerton} studied multi-hop networks, while other works addressed single-hop networks.
Deterministic packet arrivals were considered in \cite{AoI_multiaccess,AoI_cache,Vishrant17,RajatISIT18,Rajat_allerton,Rajat17,RajatSPAWC,BrownSPAWC,igorINFOCOM,igorTON18,igor16}, arbitrary arrivals in \cite{AoI_LGFS16,AoI_LGFS17,PAoI_scheduling,AoI_scheduling,YinSun18} and stochastic arrivals in \cite{YuPin18,YuPin17,AoI_sync,BinLi18,AoI_discipline,Rajat17}. Networks with no queueing, i.e. when packets are discarded if not scheduled immediately upon arrival, were considered in \cite{YuPin17,YuPin18}, First-In First-Out (FIFO) queues were considered in \cite{PAoI_scheduling,AoI_scheduling,AoI_sync,Rajat17} and other works considered Last-Generated First-Served queues, which are often equivalent to the simpler Last-In First-Out (LIFO) queues. Reliable links over which transmissions are always successful are considered in \cite{AoI_LGFS16,AoI_LGFS17,BrownSPAWC,PAoI_scheduling,AoI_scheduling,YinSun18,YuPin18,YuPin17,AoI_sync,AoI_cache,AoI_discipline,Rajat_allerton} and other works considered unreliable links.

Most relevant to this paper are \cite{igorINFOCOM,igorTON18,AoI_sync,YinSun18,YuPin18,Rajat17}. In \cite{Rajat17}, the authors consider a network with stochastic packet arrivals, FIFO queues and link scheduling following a Stationary Randomized policy. 
An expression for the AoI in a discrete time G/Ber/1 queue is derived and used to develop a method of jointly tunning arrival and service rates of all links in order to minimize AoI. \textcolor{black}{In \cite{YinSun18}, the authors develop scheduling policies for multi-server queueng systems in which streams have synchronized packet arrivals.} In \cite{YuPin18}, the authors develop scheduling policies based on the Whittle's Index for networks with stochastic arrivals, no queues and reliable broadcast channels. The authors in \cite{AoI_sync} utilize an alternative definition of AoI to develop an Age-Based Max-Weight policy for a network with stochastic arrivals, FIFO queues and unreliable links. In \cite{igorINFOCOM,igorTON18}, the authors consider a network with deterministic arrivals, LIFO queues and unreliable broadcast channels, and develop three policies: Optimal Stationary Randomized, Whittle's Index and Age-Based Max-Weight. 

\textbf{In this paper, we develop a framework for addressing link scheduling optimization in networks with stochastic packet arrivals and unreliable links operating under three common queueing disciplines.}
Our main contributions include: i) deriving a lower bound on the AoI performance achievable by any given network operating under any queueing discipline; ii) developing both an Optimal Stationary Randomized policy and an Age-Based Max-Weight policy under three common queueing disciplines; and iii) evaluating the combined impact of the stochastic arrivals, queueing discipline and scheduling policy on AoI. We show that, contrary to intuition, the Optimal Stationary Randomized policy for LIFO queues is insensitive to packet arrival rates. Simulation results show that the performance of the Age-Based Max-Weight policy for LIFO queues is close to the analytical lower bound.

\textcolor{black}{This paper generalizes our earlier results in \cite{igorINFOCOM,igorTON18}. The main difference is that in \cite{igorINFOCOM,igorTON18} we assume that when the BS selects a stream, a new packet with fresh information is generated and then transmitted to the corresponding destination in the same time-slot. It follows that in \cite{igorINFOCOM,igorTON18} the
packet delay is always $1$ slot and the AoI is reduced to $h(t)=1$ slot after every packet delivery. In contrast, in this paper, we consider a network in which packets are generated according to a stochastic process and are enqueued before being transmitted. This seemingly modest distinction affects the packet delay and the evolution of AoI over time, which in turn affects the results and proofs throughout the paper significantly. For example, consider the analysis of Stationary Randomized policies. Under the assumptions in \cite{igorINFOCOM,igorTON18}, the AoI evolution is stochastically renewed after every packet delivery, since $h(t)=1$, and thus the AoI can be analyzed by directly applying the elementary renewal theorem for renewal-reward processes. In contrast, in this paper, the evolution of AoI may be dependent across consecutive inter-delivery intervals and, thus, the same approach is not applicable. To analyze the AoI, we obtain the stationary distribution of a two-dimensional Markov Chain in Proposition~\ref{prop.EWSAoI_randomized_OptimalBuffer}.}

The remainder of this paper is organized as follows. In Sec.~\ref{sec.Model}, we describe the network model. In Sec.~\ref{sec.Lower_Bound} we derive an analytical lower bound on the AoI minimization problem. In Sec.~\ref{sec.Randomized}, we develop the Optimal Stationary Randomized policy for each queueing discipline and characterize their AoI performance. In Sec.~\ref{sec.Max_Weight}, we develop the Max-Weight policy and obtain performance guarantees in terms of AoI. In Sec.~\ref{sec.Simulation}, we provide numerical results. 
The paper is concluded in Sec.~\ref{sec.Conclusion}. Due to the space constraint, some of the technical proofs are provided in the report in \cite{TechRep}.
\section{SYSTEM MODEL}\label{sec.Model}
Consider a wireless network with a BS serving packets from $N$ streams to $N$ destinations, as illustrated in Fig.~\ref{fig.Network}. Time is slotted with 
slot index $t\in\{1,2,\cdots,T\}$, where $T$ is the time-horizon of this discrete-time system. 
At the beginning of every slot $t$, a new packet from stream $i\in\{1,2,\cdots,N\}$ arrives to the system with probability $\lambda_i\in(0,1],\forall i$. Let $a_i(t)\in\{0,1\}$ be the indicator function that is equal to $1$ when a packet from stream $i$ arrives in slot $t$, and $a_i(t)=0$ otherwise. This Bernoulli arrival process is i.i.d. over time and independent across different streams, with $\mathbb{P}(a_i(t)=1)=\lambda_i,\forall i,t$. 

\begin{figure}[ht!]
\begin{center}
\includegraphics[width=0.9\columnwidth]{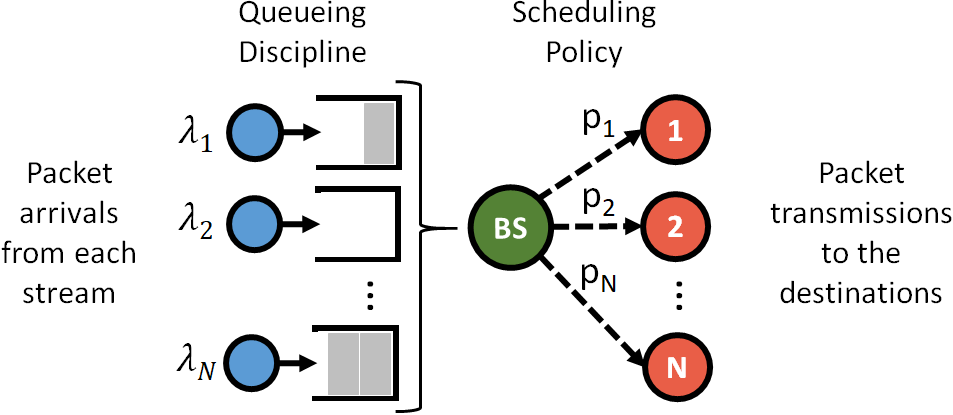}
\end{center}
\vspace{-0.2cm}
\caption{Illustration of the wireless network. 
} \label{fig.Network}
\vspace{-0.2cm}
\end{figure}

Packets from stream $i$ are enqueued in queue $i$. Denote by Head-of-Line (HoL) packets the set of packets \emph{from all queues} that are available to the BS for transmission in a given slot $t$. Depending on the queueing discipline employed by the network, queues can be of three types:
\begin{enumerate}[(i)]
\item \emph{FIFO queues}: packets are served in order of arrival. The HoL packets in slot $t$ are the oldest packets in each queue. This is a standard queueing discipline, widely deployed in communication systems. However, only a few works on link scheduling optimization \cite{PAoI_scheduling,AoI_scheduling,AoI_sync,Rajat17} consider this queueing discipline;
\item \emph{Single packet queues}: when a new packet arrives, older packets from the same stream are dropped from the queue. The HoL packets in slot $t$ are the freshest (i.e. most recently generated) packets in each queue. This queueing discipline is known to minimize the AoI in a variety of contexts. \textcolor{black}{\emph{From the perspective of the AoI, Single packet queues are equivalent to LIFO queues}};
\item \emph{No queues}: packets can be transmitted only duing the slot in which they arrive. The HoL packets in slot $t$ are given by the set $\{i|a_i(t)=1\}$. This queueing discipline is considered in \cite{YuPin17,YuPin18} for its ease of analysis.
\end{enumerate}


Let $z_i(t)$ represent the system time of the HoL packet in queue $i$ at the beginning of slot $t$. By definition, we have $z_i(t):=t-\tau_i^A(t)$, where $\tau_i^A(t)$ is the arrival time of the HoL packet in queue $i$. Naturally, the value of $\tau_i^A(t)$ changes only when the HoL packet changes, namely when the current HoL packet is served or dropped and there is another packet in the same queue; or when the queue is empty and a new packet arrives. Notice that $z_i(t)$ is undefined when queue $i$ is empty. 

We denote by $z_i^F(t)$, $z_i^S(t)$ and $z_i^N(t)$, the system times associated with \emph{FIFO queues}, \emph{Single packet queues} and \emph{No queues}, respectively. For all three cases, whenever the system time is defined, it evolves according to the definition $z_i(t):=t-\tau_i^A(t)$. Moreover, it follows from the description of the queueing disciplines that the evolution of $z_i^S(t)$ can be written as
\begin{equation}
z_i^S(t)=\left\{\begin{array}{cc} 0 & \mbox{if} \; a_i(t)=1 ; \\ z_i^S(t-1)+1 & \mbox{otherwise,}\end{array}\right. \label{eq.system_time_Optimal}
\end{equation}
and the evolution of $z_i^N(t)$ is such that $z_i^N(t)=0$ whenever an arrival occurs, i.e. $a_i(t)=1$, and is undefined otherwise. In contrast, the evolution of $z_i^F(t)$ cannot be simplified for it depends on both the arrival times and service times of packets in the queue.

In each slot $t$, the BS either idles or selects a stream and transmits its HoL packet to the corresponding destination over the wireless channel. Let $u_i(t)\in\{0,1\}$ be the indicator function that is equal to $1$ when the BS transmits the HoL packet from stream $i$ during slot $t$, and $u_i(t)=0$ otherwise. The BS can transmit at most one packet at any given time-slot $t$. Hence, we have 
\begin{equation}\label{eq.one_packet_per_slot}
\textstyle\sum_{i=1}^Nu_i(t)\leq 1, \forall t \; .
\end{equation}
The transmission scheduling policy governs the sequence of decisions $\{u_i(t)\}_{i=1}^N$ of the BS.



Let $c_i(t)\in\{0,1\}$ represent the channel state associated with destination $i$ during slot $t$. When the channel is \emph{ON}, we have $c_i(t)=1$, and when the channel is \emph{OFF}, we have $c_i(t)=0$. The channel state process is i.i.d. over time and independent across different destinations, with $\mathbb{P}(c_i(t)=1)=p_i,\forall i,t$. 

Let $d_i(t)\in\{0,1\}$ be the indicator function that is equal to $1$ when destination $i$ successfully receives a packet during slot $t$, and $d_i(t)=0$ otherwise. A successful reception occurs when the HoL packet is transmitted and the associated channel is ON, implying that $d_i(t)=c_i(t)u_i(t),\forall i,t$. 
Moreover, since the BS does not know the channel states prior to making scheduling decisions, $u_i(t)$ and $c_i(t)$ are independent, and
$\mathbb{E}[d_i(t)]=p_i\mathbb{E}[u_i(t)], \forall i,t$.
 
The transmission scheduling policies considered in this paper are non-anticipative, i.e. policies that do not use future information in making scheduling decisions. 
Let $\Pi$ be the class of non-anticipative policies and let $\pi\in\Pi$ be an arbitrary admissible policy. Our goal is to develop scheduling policies $\pi$ that minimize the average AoI in the network. Next, we formulate the AoI minimization problem.

\subsection{Age of Information}
The AoI depicts how old the information is from the perspective of the destination. Let $h_i(t)$ be the AoI associated with destination $i$ at the beginning of slot $t$. By definition, we have $h_i(t):=t-\tau_i^D(t)$, where $\tau^D_i(t)$ is the arrival time of the freshest packet delivered to destination $i$ before slot $t$. 
If during slot $t$ destination $i$ receives a packet with system time $z_i(t)=t-\tau_i^A(t)$ such that $\tau_i^A(t)>\tau_i^D(t)$, then in the next slot we have $h_i(t+1)=z_i(t)+1$. Alternatively, if during slot $t$ destination $i$ does not receive a \emph{fresher packet}, then the information gets one slot older, which is represented by $h_i(t+1)=h_i(t)+1$. 
Notice that the three queueing disciplines considered in this paper select HoL packets with increasing freshness, implying that $\tau_i^A(t)>\tau_i^D(t)$ holds\footnote{One example of a queueing discipline that can violate $\tau_i^A(t)>\tau_i^D(t)$ is the Last-In First-Out (LIFO) queue. When an older packet with $\tau_i^A(t)\leq\tau_i^D(t)$ is delivered, the associated AoI does not decrease and the network runs as if no packet was delivered. It follows that, from the perspective of the AoI, \emph{LIFO queues} are equivalent to \emph{Single packet queues}.} for every received packet.
Hence, the AoI evolves as follows:
\begin{equation}\label{eq.destination_AoI_1}
h_i(t+1)=\left\{\begin{array}{cc}z_i(t)+1 & \mbox{if} \; d_i(t)=1; \\ h_i(t)+1 & \mbox{otherwise,}\end{array}\right.
\end{equation}
for simplicity, and without loss of generality, we assume that $h_i(1)=1$ and $z_i(0)=0, \forall i$. 
Substituting $z_i^F(t)$, $z_i^S(t)$ and $z_i^N(t)$ into \eqref{eq.destination_AoI_1} we obtain the AoI associated with \emph{FIFO queues}, \emph{Single packet queues} and \emph{No queues}, respectively. In Fig.~\ref{fig.Evolution} we illustrate the evolution of $h_i(t)$ and $z_i(t)$ in a network employing \emph{Single packet queues}. 

\begin{figure}[ht!]
\begin{center}
\includegraphics[height=3.5cm]{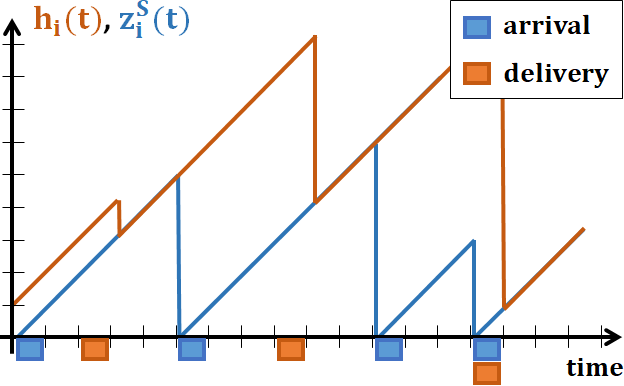}
\end{center}
\vspace{-0.2cm}
\caption{The blue and orange rectangles represent a packet arrival to queue $i$ and a successful packet delivery to destination $i$, respectively. The blue curve shows the evolution of $z_i(t)$ for the \emph{Single packet queue} and the orange curve shows the AoI associated with destination $i$.} \label{fig.Evolution}
\vspace{-0.2cm}
\end{figure}

The time-average AoI associated with destination $i$ is given by $\mathbb{E}\left[\sum_{t=1}^Th_i(t)\right]/T$. For capturing the freshness of the information of a network employing scheduling policy $\pi\in\Pi$, we define the Expected Weighted Sum AoI (EWSAoI) in the limit as the time-horizon grows to infinity as
\begin{equation}\label{eq.Objective_1}
\mathbb{E}\left[J^\pi\right]=\lim_{T\rightarrow\infty}\frac{1}{TN}\sum_{t=1}^T \sum_{i=1}^N w_i \mathbb{E}\left[h_{i}^\pi(t)\right] \; , 
\end{equation}
where $w_i$ is a positive real number that represents the priority of stream $i$. We denote by AoI-optimal, the scheduling policy $\pi^*\in\Pi$ that achieves minimum EWSAoI, namely
\begin{equation}\label{eq.AoI_optimal}
\mathbb{E}[J^*]=\min_{\pi\in\Pi}\mathbb{E}\left[J^\pi\right] \; ,
\end{equation}
where the expectation is with respect to the randomness in the channel state $c_i(t)$, scheduling decisions $u_i(t)$ and arrival process $a_i(t)$. Next, we introduce the long-term throughput and discuss the stability of FIFO queues.

\subsection{Long-term Throughput}
Let $D_i^\pi(T)=\sum_{t=1}^Td_i^\pi(t)$ be the total number of packets delivered to destination $i$ by the end of the time-horizon $T$ when the admissible policy $\pi\in\Pi$ is employed. Then, the long-term throughput associated with destination $i$ is defined as 
\begin{equation}\label{eq.throughput}
\hat{q}_i^\pi:=\lim_{T\rightarrow\infty}\frac{\mathbb{E}\left[D_i(T)\right]}{T}\; .
\end{equation}
Throughout this paper, we assume that $\hat{q}_i^\pi>0,\forall i$. Since packets from stream $i$ are generated at a rate $\lambda_i$, the long-term throughput provided to destination $i$ cannot be higher than $\lambda_i$. Hence, the long-term throughput satisfies
\begin{equation}\label{eq.throughput_condition1}
\hat{q}_i^\pi\leq \lambda_i, \forall i\; .
\end{equation}

The shared and unreliable wireless channel further restricts the set of achievable values of long-term throughput $\{\hat{q}_i^\pi\}_{i=1}^N$. By employing $\mathbb{E}[d_i(t)]=p_i\mathbb{E}[u_i(t)]$ and \eqref{eq.one_packet_per_slot} into the definition of long-term throughput in \eqref{eq.throughput}, we obtain
\begin{equation}\label{eq.throughput_condition2}
\frac{\mathbb{E}\left[D_i^\pi(T)\right]}{T}=\frac{p_i\sum_{t=1}^T\mathbb{E}[u_i^\pi(t)]}{T}\Rightarrow 
\sum_{i=1}^N\frac{\hat{q}_i^\pi}{p_i}\leq 1 \; .
\end{equation}

Inequalities \eqref{eq.throughput_condition1} and \eqref{eq.throughput_condition2} are necessary conditions\footnote{In \cite{igorINFOCOM,BinLi18}, the authors consider destinations with minimum timely-throughput requirements. Notice that conditions \eqref{eq.throughput_condition1} and \eqref{eq.throughput_condition2} are not throughput requirements enforced by the destinations. They are necessary conditions that follow naturally from the stochastic arrivals and interference constraints of the network.} for the long-term throughput $\{\hat{q}_i^\pi\}_{i=1}^N$ of any admissible scheduling policy $\pi\in\Pi$, regardless of the queueing discipline. Both inequalities are used for deriving the lower bound in Sec.~\ref{sec.Lower_Bound}. Next, we discuss the stability of FIFO queues and its impact on the AoI minimization problem. 

\subsection{Queue Stability}\label{sec.QueueStability}
Let $Q_i^\pi(t)$ be the number of packets in queue $i$ at the beginning of slot $t$ when policy $\pi$ is employed. Then, we say that queue $i$ is \emph{stable} if 
\begin{equation}\label{eq.stability}
\lim_{T\rightarrow\infty}\mathbb{E}\left[Q_i^\pi(T)\right]<\infty \; .
\end{equation}
A network is stable under policy $\pi$ when all of its queues are stable. For networks with \emph{Single packet queues} and \emph{No queues}, stability is trivial since the backlogs are such that $Q_i^\pi(t)\in\{0,1\},\forall t$, regardless of the scheduling policy. The discussion about queue stability that follows is meaningful only for the case of \emph{FIFO queues}.

\begin{defn}[Stability Region]\label{def.stability}
A set of arrival rates $\{\lambda_i\}_{i=1}^N$ is within the stability region of a given wireless network if there exists an admissible scheduling policy $\pi\in\Pi$ that stabilizes all queues.
\end{defn}

When the network is unstable under a policy $\eta\in\Pi$, then the expected backlog of at least one of its queues grows indefinitely over time. An infinitely large backlog leads to packets with infinitely large system times, i.e. $z_i(t)\rightarrow\infty$. It follows from the evolution of $h_i(t)$ in \eqref{eq.destination_AoI_1} that the AoI also increases indefinitely and, as a result, the Expected Weighted Sum AoI diverges, namely $\mathbb{E}[J^\eta]\rightarrow\infty$. Clearly, instability is a critical disadvantage for \emph{FIFO queues}. Hence, we are interested in scheduling policies that can stabilize the network whenever the arrival rates $\{\lambda_i\}_{i=1}^N$ are within the stability region. Prior to introducing the policies, we derive a lower bound to the AoI minimization problem. 

\section{LOWER BOUND}\label{sec.Lower_Bound}
In this section, we derive an alternative (and more insightful) expression for the AoI objective function $J^\pi$ in \eqref{eq.Objective_1} in terms of packet delay and inter-delivery times. Then, we use this expression to obtain a lower bound to the AoI minimization problem, namely $L_B\leq\mathbb{E}[J^*]$, for any given network operating under an arbitrary queueing discipline. Surprisingly, the lower bound $L_B$ depends only on the network's long-term throughput.

\subsection{AoI in terms of packet delay and inter-delivery times}
Consider a network employing policy $\pi$ during the time-horizon $T$. Let $\Omega$ be the sample space associated with this network and let $\omega\in\Omega$ be a sample path. For a given sample path $\omega$, let $t_i[m]$ be the index of the time-slot in which the $m$th (fresher\footnote{Recall that the delivery of an older packet with $\tau_i^A(t)\leq\tau_i^D(t)$ does not change the associated AoI and, thus, should not be counted.}) packet was delivered to destination $i$, $\forall m\in\{1,\cdots,D_i(T)\}$, where $D_i(T)$ is the total number of packets delivered. Then, we define $I_i[m]:=t_i[m]-t_i[m-1]$ as the \emph{inter-delivery time}, with $I_i[1]=t_i[1]$ and $t_i[0]=0$.

The \emph{packet delay} associated with the $m$th packet delivery to destination $i$ is given by $z_i(t_i[m])$. Notice that $z_i(t_i[m])$ is the system time of the HoL packet at the time it is delivered to the destination, which is the definition of packet delay. 
To simplify notation, we use $z_i[m]$ instead of $z_i(t_i[m])$.




Define the operator $\mathbb{\bar{M}}[\mathbf{x}]$ that calculates the sample mean of a set of values $\mathbf{x}$. Using this operator, the sample mean of $I_i[m]$ for a fixed destination $i$ is given by
\begin{equation}
\mathbb{\bar{M}}[I_i]=\frac{1}{D_i(T)}\sum_{m=1}^{D_i(T)}I_i[m] \; . \label{eq.sample_mean_I}
\end{equation}
For simplicity of notation, the time-horizon $T$ is omitted in the sample mean operator $\mathbb{\bar{M}}$.

\begin{prop}\label{prop.alternative_EWSAoI}
The infinite-horizon AoI objective function $J^\pi$ can be expressed as follows
\begin{equation}\label{eq.alternative_EWSAoI}
J^\pi=\lim_{T\rightarrow\infty}\sum_{i=1}^N\frac{w_i}{2N}\left[\frac{\mathbb{\bar{M}}[I_i^2]}{\mathbb{\bar{M}}[I_i]}+\frac{2\mathbb{\bar{M}}[z_iI_i]}{\mathbb{\bar{M}}[I_i]}+1\right] \mbox{ w.p.1} \; ,
\end{equation}
where $I_i[m]$ is the inter-delivery time, $z_i[m]$ is the packet delay and 
\begin{equation}\label{eq.sample_mean_zI}
\mathbb{\bar{M}}[z_iI_i]=\frac{1}{D_i(T)}\sum_{m=1}^{D_i(T)}z_i[m-1]I_i[m] \; . 
\end{equation}
\end{prop}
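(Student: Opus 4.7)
The plan is to rewrite the time sum $\sum_{t=1}^T h_i(t)$ geometrically by slicing it along the successful delivery times $t_i[1]<t_i[2]<\cdots<t_i[D_i(T)]$, and then divide by $T$ and pass to the limit. Between two successive deliveries, $h_i(t)$ simply ramps up by one per slot, and after the $m$th delivery it is reset to $z_i[m]+1$ by the recursion in \eqref{eq.destination_AoI_1}. So on the interval $(t_i[m-1],t_i[m]]$ (using $t_i[0]=0$ and $z_i[0]=0$, which is consistent with $h_i(1)=1$), the AoI values are exactly $z_i[m-1]+1,\,z_i[m-1]+2,\,\ldots,\,z_i[m-1]+I_i[m]$, so their sum is $z_i[m-1]\,I_i[m]+\tfrac{I_i[m](I_i[m]+1)}{2}$.

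Carrying this across all $D_i(T)$ intervals and separating the tail from $t_i[D_i(T)]+1$ to $T$ gives
\begin{equation*}
\sum_{t=1}^{T} h_i(t) \;=\; \sum_{m=1}^{D_i(T)} \Bigl[\,z_i[m-1]\,I_i[m] + \tfrac{1}{2}I_i[m]^2 + \tfrac{1}{2}I_i[m]\,\Bigr] \;+\; R_i(T),
\end{equation*}
where $R_i(T)=\sum_{k=1}^{T-t_i[D_i(T)]}\bigl(z_i[D_i(T)]+k\bigr)$. Dividing by $T$ and multiplying and dividing the main sum by $D_i(T)$, I can recognize the three sample means $\bar{\mathbb{M}}[z_iI_i]$, $\bar{\mathbb{M}}[I_i^2]$, and $\bar{\mathbb{M}}[I_i]$ defined in the proposition, obtaining
\begin{equation*}
\frac{1}{T}\sum_{t=1}^T h_i(t) \;=\; \frac{D_i(T)}{T}\Bigl[\bar{\mathbb{M}}[z_iI_i] + \tfrac{1}{2}\bar{\mathbb{M}}[I_i^2] + \tfrac{1}{2}\bar{\mathbb{M}}[I_i]\Bigr] + \frac{R_i(T)}{T}.
\end{equation*}

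The next step is to take $T\to\infty$. By definition of the throughput in \eqref{eq.throughput}, $D_i(T)/T\to\hat q_i^\pi$ w.p.1, and the identity $D_i(T)\,\bar{\mathbb{M}}[I_i]=t_i[D_i(T)]\le T$ together with the fact that the trailing idle stretch $T-t_i[D_i(T)]$ is $o(T)$ w.p.1 (using $\hat q_i^\pi>0$) gives $\hat q_i^\pi\,\lim_T\bar{\mathbb{M}}[I_i]=1$ almost surely. Substituting $\hat q_i^\pi = 1/\bar{\mathbb{M}}[I_i]$ in the limit collapses the bracket into
\begin{equation*}
\lim_{T\to\infty}\frac{1}{T}\sum_{t=1}^T h_i(t) \;=\; \frac{1}{2}\Bigl[\frac{\bar{\mathbb{M}}[I_i^2]}{\bar{\mathbb{M}}[I_i]} + \frac{2\bar{\mathbb{M}}[z_iI_i]}{\bar{\mathbb{M}}[I_i]} + 1\Bigr] \quad \text{w.p.1},
\end{equation*}
and multiplying by $w_i/N$ and summing over $i$ yields the claimed expression for $J^\pi$.

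The only delicate point is the tail term $R_i(T)/T$. The inner ramp $\sum_{k=1}^{T-t_i[D_i(T)]} k$ is bounded by $(T-t_i[D_i(T)])^2/2$ and the offset by $z_i[D_i(T)](T-t_i[D_i(T)])$; both vanish when divided by $T$ provided the residual interval length is $o(T^{1/2})$ (for the quadratic piece) and $z_i[D_i(T)]=o(T/(T-t_i[D_i(T)]))$. Under the standing assumption $\hat q_i^\pi>0$, the strong law applied to the $I_i[m]$ (which are w.p.1 summable in the Cesàro sense to $1/\hat q_i^\pi<\infty$) guarantees that the residual interval is $o(T)$, and a standard bounding argument handles the offset. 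This is the main technical nuisance; once the tail is shown to disappear w.p.1, the rest of the identity is an algebraic rearrangement of sample means.
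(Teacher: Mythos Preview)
Your approach is essentially the same as the paper's: slice $\sum_{t=1}^T h_i(t)$ along delivery epochs, evaluate the arithmetic ramp on each inter-delivery block, rewrite in terms of the sample means $\bar{\mathbb{M}}[I_i]$, $\bar{\mathbb{M}}[I_i^2]$, $\bar{\mathbb{M}}[z_iI_i]$, and pass to the limit. The paper substitutes $\sum_m I_i[m]+R_i=T$ early to turn the linear pieces into the constant~$1$, whereas you keep $\tfrac12\bar{\mathbb{M}}[I_i]$ and cancel it in the limit via $D_i(T)/T\to 1/\bar{\mathbb{M}}[I_i]$; this is only a cosmetic rearrangement.

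The one place where your argument is weaker than the paper's is the tail term. You correctly observe that killing $R_i(T)/T$ needs the residual gap to be $o(T^{1/2})$ for the quadratic piece, but your justification only delivers $o(T)$ via the throughput assumption and a strong-law heuristic, and you leave the rest to ``a standard bounding argument.'' The paper sidesteps this by a without-loss-of-generality reduction: it assumes $z_i(t)<\infty$ for all $t$ w.p.1 (otherwise the objective diverges and the identity is vacuous), and from this infers that deliveries recur infinitely often and that $z_i[m]$ and $R_i$ stay finite, so all three residual ratios $R_i^2/T$, $R_i/D_i(T)$, $z_i[D_i(T)]R_i/T$ vanish. That framing is cleaner than appealing to a SLLN for the $I_i[m]$, which in general (for arbitrary $\pi\in\Pi$ and arbitrary queueing disciplines) are neither i.i.d.\ nor guaranteed to have the moment control you would need for $o(T^{1/2})$.
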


\begin{proof}
Provided in the technical report \cite[Appendix A]{TechRep}. 
\end{proof}

Equation \eqref{eq.alternative_EWSAoI} is valid for networks operating under an \emph{arbitrary queueing discipline} and employing \emph{any scheduling policy} $\pi\in\Pi$. \textcolor{black}{A similar result for the case of a single stream, $N=1$, was derived in \cite{Inoue17}}. This equation provides useful insights into the AoI minimization. The first term on the RHS of \eqref{eq.alternative_EWSAoI}, namely $\mathbb{\bar{M}}[I_i^2]/2\mathbb{\bar{M}}[I_i]$, depends only on the service regularity provided by the scheduling policy. The second term on the RHS of \eqref{eq.alternative_EWSAoI} depends on both the packet delay $z_i[m-1]$ and the inter-delivery time $I_i[m]$, as follows
\begin{equation}\label{eq.alternative_single_RHS_2}
\frac{\mathbb{\bar{M}}[z_iI_i]}{\mathbb{\bar{M}}[I_i]}=\sum_{m=1}^{D_i(T)}\frac{I_i[m]}{\sum_{j=1}^{D_i(T)}I_i[j]}z_i[m-1] \; .
\end{equation}
Notice that \eqref{eq.alternative_single_RHS_2} is a weighted sample mean of the packet delays. Intuitively, for minimizing this term, both the queueing discipline and the scheduling policy should attempt to deliver packets with low delay $z_i[m-1]$ and, when the delay is high, they should deliver the next packet as soon as possible in order to reduce the weight $I_i[m]$ on the weighted mean  \eqref{eq.alternative_single_RHS_2}. 

The expression in \eqref{eq.alternative_EWSAoI} provides intuition on how the scheduling policy should manage the packet delays $z_i[m]$ and the inter-delivery times $I_i[m]$ in order to minimize AoI. Moreover, it shows that by utilizing the simplifying assumption of queues always having fresh packets available for transmission, the scheduling policy disregards $z_i[m]$ and fails to address the term in \eqref{eq.alternative_single_RHS_2}. Next, we use \eqref{eq.alternative_EWSAoI} to obtain a lower bound to the AoI minimization problem and, in upcoming sections, we consider scheduling policies that take into account both $I_i[m]$ and $z_i[m]$.

\subsection{Lower Bound}
A lower bound on AoI is obtained from the expression in Proposition \ref{prop.alternative_EWSAoI}. By applying Jensen's inequality $\mathbb{\bar{M}}[I_i^2]\geq(\mathbb{\bar{M}}[I_i])^2$ to \eqref{eq.alternative_EWSAoI}, manipulating the resulting expression and then employing a minimization over policies in $\Pi$, we obtain
\begin{tcolorbox}[title=Lower Bound,left=1mm,right=1mm,top=-2mm,bottom=0mm]
\begin{subequations}
\begin{align}
L_B=&\min_{\pi\in\Pi}\left\{\frac{1}{2N}\sum_{i=1}^Nw_i\left(\frac{1}{\hat{q}_i^{\pi}}+1\right)\right\} \label{eq.LowerBound_1}\\
\mbox{s.t. } &\textstyle\sum_{i=1}^N \hat{q}_i^{\pi}/p_i \leq 1 \; ;\label{eq.LowerBound_2}\\
&\hat{q}_i^{\pi} \leq \lambda_i , \forall i \; ,\label{eq.LowerBound_3}
\end{align}
\end{subequations}
\end{tcolorbox}
\noindent where \eqref{eq.LowerBound_2} and \eqref{eq.LowerBound_3} are the necessary conditions for the long-term throughput in \eqref{eq.throughput_condition2} and \eqref{eq.throughput_condition1}, respectively. Notice that the optimization problem in \eqref{eq.LowerBound_1}-\eqref{eq.LowerBound_3} depends only on the network's long-term throughput $\{\hat{q}_i^{\pi}\}_{i=1}^N$ and that the condition $\hat{q}_i^{\pi} \leq \lambda_i$ limits the throughput to the packet arrival rate of the respective stream. To find the unique solution to \eqref{eq.LowerBound_1}-\eqref{eq.LowerBound_3}, we analyze the associated KKT Conditions. 

\begin{thr}[Lower bound]\label{theo.LowerBound}
For any given network with parameters $(N,p_i,\lambda_i,w_i)$ and an arbitrary queueing discipline, the optimization problem in \eqref{eq.LowerBound_1}-\eqref{eq.LowerBound_3} provides a lower bound on the AoI minimization problem, namely $L_B\leq\mathbb{E}[J^*]$. The unique solution to \eqref{eq.LowerBound_1}-\eqref{eq.LowerBound_3} is given by
\begin{equation}\label{eq.LowerBound_q}
\hat{q}_i^{L_B}=\min{\left\{\lambda_i,\sqrt{\frac{w_ip_i}{2N\gamma^*}}\right\}}, \forall i \; ,
\end{equation}
where $\gamma^*$ yields from Algorithm~\ref{alg.LowerBound}. The lower bound is given by 
\begin{equation}\label{eq.LowerBound}
L_B=\frac{1}{2N}\sum_{i=1}^Nw_i\left(\frac{1}{\hat{q}_i^{L_B}}+1\right) \; .
\end{equation}
\end{thr}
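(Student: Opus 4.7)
The plan is to decompose the proof into two parts: establishing that the convex program \eqref{eq.LowerBound_1}-\eqref{eq.LowerBound_3} is a valid lower bound on $\mathbb{E}[J^*]$, and then solving that program via KKT conditions to obtain the closed form \eqref{eq.LowerBound_q} together with the value \eqref{eq.LowerBound}.

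For the lower bound claim, I would start from the expression for $J^\pi$ in Proposition~\ref{prop.alternative_EWSAoI}. Jensen's inequality gives $\mathbb{\bar{M}}[I_i^2]/\mathbb{\bar{M}}[I_i] \geq \mathbb{\bar{M}}[I_i]$, and since the packet delays $z_i[m-1]$ are nonnegative, the cross term in \eqref{eq.alternative_EWSAoI} can be dropped, yielding $J^\pi \geq \lim_{T\to\infty}\sum_i (w_i/2N)(\mathbb{\bar{M}}[I_i] + 1)$ w.p.1. By the definition of inter-delivery times, $\mathbb{\bar{M}}[I_i] = t_i[D_i(T)]/D_i(T)$, which converges to $1/\hat{q}_i^\pi$ w.p.1 under the assumption $\hat{q}_i^\pi > 0$ (one must verify that the last delivery occurs essentially at $T$; this is standard since the throughput is positive). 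Taking expectations and recalling the universal necessary conditions \eqref{eq.throughput_condition1} and \eqref{eq.throughput_condition2}, every admissible $\pi$ induces a throughput vector feasible for \eqref{eq.LowerBound_1}-\eqref{eq.LowerBound_3}, so $\mathbb{E}[J^*] \geq L_B$.

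For the second part, the objective $\sum_i w_i/\hat{q}_i^\pi$ is strictly convex on $\{\hat{q}_i^\pi > 0\}$ and the constraints are linear, so the problem is a strictly convex program with a unique minimizer characterized by the KKT conditions. Introducing a multiplier $\gamma \geq 0$ for \eqref{eq.LowerBound_2} and $\mu_i \geq 0$ for \eqref{eq.LowerBound_3}, stationarity yields
\begin{equation*}
-\frac{w_i}{2N(\hat{q}_i^\pi)^2} + \frac{\gamma}{p_i} + \mu_i = 0 \; ,
\end{equation*}
so $\hat{q}_i^\pi = \sqrt{w_i/[2N(\gamma/p_i + \mu_i)]}$. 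Complementary slackness then forces a dichotomy: either $\mu_i = 0$ and $\hat{q}_i^{L_B} = \sqrt{w_i p_i/(2N\gamma^*)} \leq \lambda_i$, or $\mu_i > 0$ and $\hat{q}_i^{L_B} = \lambda_i < \sqrt{w_i p_i/(2N\gamma^*)}$. Either case collapses to the compact expression \eqref{eq.LowerBound_q}.

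The remaining task is to pin down $\gamma^*$. Define $F(\gamma) := \sum_i \min\{\lambda_i,\sqrt{w_i p_i/(2N\gamma)}\}/p_i$; this map is continuous and non-increasing on $(0,\infty)$, tends to $\sum_i \lambda_i/p_i$ as $\gamma \to 0^+$, and to $0$ as $\gamma \to \infty$. If $\sum_i \lambda_i/p_i \leq 1$ the arrival-rate constraints are already binding everywhere and $\gamma^* = 0$ (with $\hat{q}_i^{L_B} = \lambda_i$); otherwise a unique $\gamma^* > 0$ solves $F(\gamma^*) = 1$, and Algorithm~\ref{alg.LowerBound} locates it by progressively classifying streams as arrival-limited versus throughput-limited. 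I expect the main obstacle to be this final piece: because the partition of streams between the two regimes is not known a priori, a clean proof must argue that the monotonicity of $F$ together with a sorting of the thresholds $\sqrt{w_i p_i/(2N\lambda_i^2)}$ forces the classification to stabilize in at most $N$ iterations, after which $\gamma^*$ is obtained in closed form by solving a linear equation in $1/\sqrt{\gamma^*}$. Substituting $\hat{q}_i^{L_B}$ back into \eqref{eq.LowerBound_1} gives \eqref{eq.LowerBound}.
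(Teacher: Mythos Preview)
Your overall architecture matches the paper's: first reduce \eqref{eq.alternative_EWSAoI} to a throughput-only lower bound via Jensen and nonnegativity of the delay term, then solve the resulting convex program by KKT. The KKT portion of your proposal is essentially identical to the paper's, including the dichotomy between arrival-limited and channel-limited streams and the monotone search for $\gamma^*$; your observation that $F(\gamma)$ is continuous and non-increasing, with the $\sum_i \lambda_i/p_i \le 1$ case giving $\gamma^*=0$, is exactly what the paper argues.

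The one place where your argument is looser than the paper's is the passage from the sample-path bound to the bound on $\mathbb{E}[J^\pi]$. You invoke Proposition~\ref{prop.alternative_EWSAoI}, drop the cross term, apply Jensen, and then assert that $\mathbb{\bar{M}}[I_i]\to 1/\hat q_i^\pi$ w.p.1. But $\hat q_i^\pi$ is \emph{defined} as $\lim_T \mathbb{E}[D_i(T)]/T$, not as a sample-path limit, and for an arbitrary admissible policy the sample-path throughput need not exist or need not equal its expectation; your parenthetical ``this is standard since the throughput is positive'' does not cover this. To make your route rigorous you would still need a second Jensen step on the expectation side (convexity of $x\mapsto 1/x$) together with an interchange of limit and expectation. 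The paper sidesteps all of this by staying at finite $T$: it keeps the residual term $R_i$ explicitly in the sample-path inequality, \emph{minimizes analytically over $R_i$} to obtain the clean bound $\tfrac{1}{T}\sum_t h_i(t)\ge \tfrac12\bigl(T/(D_i(T)+1)+1\bigr)$, then takes expectation and applies Jensen to $\mathbb{E}[D_i(T)]/T$ before letting $T\to\infty$. That finite-$T$ minimization over $R_i$ is the missing ingredient in your sketch; once you insert it (or equivalently justify the sample-path limit plus the exchange of limit and expectation), the two proofs coincide.
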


\begin{algorithm}
\caption{Solution to the Lower Bound}\label{alg.LowerBound}
\begin{algorithmic}[1]
\State $\tilde{\gamma} \gets (\sum_{i=1}^N\sqrt{w_i/p_i})^2/(2N)$ and $\gamma_i \gets w_ip_i/2N\lambda_i^2 , \forall i$
\State $\gamma \gets \max\{\tilde{\gamma};\gamma_i\}$
\State $q_i \gets \lambda_i\min\{1 ; \sqrt{\gamma_i/\gamma}\} , \forall i$
\State $S \gets \sum_{i=1}^Nq_i/p_i$
\While{$S<1$ \textbf{and} $\gamma>0$}
\State decrease $\gamma$ slightly
\State repeat steps 4 and 5 to update $q_i$ and $S$
\EndWhile
\State \textbf{return} $\gamma^*=\gamma$ and $\hat{q}_i^{L_B}=q_i,\forall i$
\end{algorithmic}
\end{algorithm}

\begin{proof}

Provided in the technical report \cite[Appendix B]{TechRep}. 
\end{proof}



Next, we develop the Optimal Stationary Randomized policy for different queueing disciplines and derive the closed-form expression for their AoI performance.
\section{STATIONARY RANDOMIZED POLICIES}\label{sec.Randomized}
Denote by $\Pi_R$ the class of Stationary Randomized policies. Let $R\in\Pi_R$ be a scheduling policy that, in each slot $t$, selects stream $i$ with probability $\mu_i\in(0,1]$ or selects no stream with probability $\mu_0$. If the selected stream $i$ has a non-empty queue, then $u_i(t)=1$ and the HoL packet is transmitted by the BS to destination $i$. Alternatively, if the selected stream $i$ has an empty queue or policy $R$ selected no stream, then $u_i(t)=0,\forall i$ and the BS idles. The scheduling probabilities $\mu_i$ are fixed over time and satisfy $\sum_{i=1}^N\mu_i=1-\mu_0$.

Randomized policies $R\in\Pi_R$ are as simple as possible. Each policy in $\Pi_R$ is fully characterized by the set $\{\mu_i\}_{i=1}^N$. They select streams at random, without taking into account $h_i(t)$, $z_i(t)$ or queue backlogs $Q_i(t)$. Notice that policies in $\Pi_R$ are not work-conserving, since they allow the BS to idle during slots in which HoL packets are available for transmission. 

Despite their simplicity, we show that by \emph{properly tuning the scheduling probabilities} $\mu_i$ according to the network parameters $(N,p_i,\lambda_i,w_i)$, policies in $\Pi_R$ can achieve performances within a factor of $4$ from the AoI-optimal. 
On the other hand, we also show that naive choices of $\mu_i$ can lead to poor AoI performances. Next, we develop and analyze scheduling policies for different queueing disciplines which are optimal over the class $\Pi_R$. In Secs.~\ref{sec.Random_OptimalBuffer}, \ref{sec.Random_NoBuffer} and \ref{sec.Random_FIFO} we consider networks employing \emph{Single packet queues}, \emph{No queues} and \emph{FIFO queues}, respectively. Then, in Sec.~\ref{sec.Computation} we compare their AoI performances.

\subsection{Randomized Policy for Single packet queue}\label{sec.Random_OptimalBuffer}
Consider a network employing the \emph{Single packet queue} discipline on $N$ streams with packet arrival rates $\lambda_i$, priorities $w_i$ and channel reliabilities $p_i$. Recall that for the \emph{Single packet queue}, when a new packet arrives, older packets from the same stream are dropped. The BS selects streams according to $R\in\Pi_R$ with scheduling probabilities $\mu_i$. Following a successful packet transmission from stream $i$, its queue remains empty or a new packet arrives. The expected number of (consecutive) slots that queue $i$ remains empty is $1/\lambda_i-1$. When a new packet arrives, the BS transmits this packet with probability $\mu_i$. The expected number of slots necessary to successfully deliver this packet is $1/p_i\mu_i$. Under policy $R\in\Pi_R$ and for the case of \emph{Single packet queues}, the sequence of packet deliveries is a renewal process. It follows from the elementary renewal theorem \cite{DSP} that
\begin{equation}
\lim_{T\rightarrow\infty}\frac{1}{T}\sum_{t=1}^T\mathbb{E}[d_i(t)]=\frac{1}{1/p_i\mu_i+1/\lambda_i-1},\forall i,t \; .
\end{equation}

For the particular case of $\lambda_i=1$, the AoI process $h_i(t)$ is also stochastically renewed after every packet delivery and the long-term time-average $\mathbb{E}[h_i(t)]$ can be easily obtained using the elementary renewal theorem for renewal-reward processes. In contrast, for the general case of $\lambda_i\in(0,1]$, the evolution of $h_i(t)$ may be dependent across consecutive inter-delivery intervals due to its relationship with the system time $z_i^S(t)$ given in \eqref{eq.destination_AoI_1}. 
To find an expression for the long-term time-average $\mathbb{E}[h_i(t)]$ we formulate the problem as a two-dimensional Markov Chain with countably-infinite state space represented by $(h_i(t),z_i(t))$ and obtain its stationary distribution. Proposition~\ref{prop.EWSAoI_randomized_OptimalBuffer} follows from substituting the expression for $\mathbb{E}[h_i(t)]$ into the objective function in \eqref{eq.AoI_optimal}.

\begin{prop}\label{prop.EWSAoI_randomized_OptimalBuffer}
The optimal EWSAoI achieved by a network with Single packet queues over the class $\Pi_R$ is given by
\begin{tcolorbox}[title=Optimal Randomized policy for Single packet queues,left=1mm,right=1mm,top=-2mm,bottom=0mm]
\begin{subequations}
\begin{align}
\mathbb{E}\left[J^{R^S}\right]=&\min_{R\in\Pi_R}\left\{\frac{1}{N} \sum_{i=1}^N w_i \left(\frac{1}{\lambda_i}-1+\frac{1}{p_i\mu_i}\right)\right\} \label{eq.EWSAoI_randomized_OptimalBuffer}\\
\mbox{s.t. } &\textstyle\sum_{i=1}^N \mu_i \leq 1 \; ;\label{eq.EWSAoI_randomized_OptimalBuffer_2}
\end{align}
\end{subequations}
\end{tcolorbox}
\noindent where $R^S$ denotes the Optimal Stationary Randomized Policy for the Single packet queue discipline.
\end{prop}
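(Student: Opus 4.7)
The plan is to exploit the fact that under any policy $R \in \Pi_R$ applied to a Single packet queue, the sequence of successful deliveries to destination $i$ forms a renewal process, and to combine this with the sample-path identity of Proposition~\ref{prop.alternative_EWSAoI}. After each delivery the queue resets to empty, then waits $X_i$ slots for the next arrival and a further $Y_i$ slots for the next successful transmission, with $X_i$ and $Y_i$ independent and $X_i, Y_i+1$ geometric with parameters $\lambda_i$ and $p_i\mu_i$ respectively. Hence $I_i = X_i + Y_i$ has mean $1/\lambda_i + 1/(p_i\mu_i) - 1$ and a closed form for $\mathbb{E}[I_i^2]$ via $\mathrm{Var}(I_i) = \mathrm{Var}(X_i) + \mathrm{Var}(Y_i)$.

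Next I need the law of the delivered packet's system time $z_i[m]$. Following the hint in the paper, I would set up the two-dimensional Markov chain $(h_i(t), z_i^S(t))$ on the countable space $\mathbb{N} \times (\{\varnothing\} \cup \mathbb{N})$, where $\varnothing$ encodes an empty queue. The marginal chain on $z_i^S$ is tractable because its one-step transitions depend only on the arrival, scheduling, and channel indicators: from $z_i^S = k \ge 0$ it moves to $0$ with probability $\lambda_i$ (fresh arrival replaces the HoL), to $\varnothing$ with probability $p_i\mu_i(1-\lambda_i)$ (delivery with no replacement arrival), and to $k+1$ with probability $(1-p_i\mu_i)(1-\lambda_i)$. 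The balance equations give $\mathbb{P}(z_i^S = k \mid \text{queue nonempty}) = (1-r_i)\,r_i^{\,k}$ with $r_i := (1-\lambda_i)(1-p_i\mu_i)$. Since a delivery requires a nonempty queue and the indicator $u_i(t)c_i(t)$ is independent of $z_i^S(t)$, the delivered packet's system time $z_i[m]$ inherits this geometric law, so $\mathbb{E}[z_i[m]] = r_i/(1-r_i)$.

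Assembling the pieces, $z_i[m-1]$ depends only on the trajectory up to delivery $m-1$, while $I_i[m]$ depends only on the fresh i.i.d.\ sequences after delivery $m-1$; they are independent, so $\mathbb{E}[z_iI_i]/\mathbb{E}[I_i] = \mathbb{E}[z_i]$. Substituting into Proposition~\ref{prop.alternative_EWSAoI} yields
\begin{equation*}
\bar{h}_i^R \;=\; \mathbb{E}[z_i] + \frac{\mathbb{E}[I_i^2]}{2\,\mathbb{E}[I_i]} + \frac{1}{2}.
\end{equation*}
A short simplification, which rests on the identity $1 - r_i = \lambda_i p_i\mu_i\,\mathbb{E}[I_i]$, collapses the right-hand side to $1/\lambda_i - 1 + 1/(p_i\mu_i)$. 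Summing over $i$ with weights $w_i/N$ delivers the objective in \eqref{eq.EWSAoI_randomized_OptimalBuffer}, and the only feasibility condition defining $\Pi_R$ is $\mu_0 = 1 - \sum_i \mu_i \geq 0$, which is precisely \eqref{eq.EWSAoI_randomized_OptimalBuffer_2}; convexity of $\sum_i w_i/(p_i\mu_i)$ in $\{\mu_i\}$ guarantees that the minimum is attained.

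The main obstacle is the middle step: trying to compute $\mathbb{E}[z_i[m]]$ directly from the cycle decomposition is painful, because a new arrival during $Y_i$ discards the old HoL and resets $z_i^S$ to $0$, tightly coupling $z_i[m]$ to the \emph{internal} arrival process of cycle $m$. The two-dimensional Markov chain viewpoint side-steps this coupling cleanly by letting one read $\mathbb{E}[z_i]$ off a one-dimensional balance equation; once that quantity is in hand, everything that follows is routine algebra and a standard convex optimization over the scheduling probabilities.
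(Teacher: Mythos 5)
Your argument is correct, and it reaches the result by a genuinely different route than the paper. The paper's proof never invokes Proposition~\ref{prop.alternative_EWSAoI}: it works directly with the two-dimensional chain $(h_i(t),z_i^S(t))$, enumerates the three transition types, solves for the full joint stationary distribution (normalizing to get $\mathbb{P}(1,0)=\lambda_i^2p_i\mu_i$), marginalizes to $\mathbb{P}(h_i(t)=h)$ and computes $\mathbb{E}[h]=1/(p_i\mu_i)+1/\lambda_i-1$ by summing the resulting series. You instead push everything through the delay/inter-delivery identity, which reduces the problem to (a) the first two moments of $I_i=X_i+Y_i$, available in closed form from the renewal structure of deliveries, and (b) the mean of the delivered packet's system time, read off a one-dimensional birth chain on $z_i^S$ whose conditional stationary law is geometric with ratio $r_i=(1-\lambda_i)(1-p_i\mu_i)$. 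The two load-bearing observations in your version are exactly right and worth making explicit: the queue empties at every delivery, so $z_i[m-1]$ is a function of cycle $m-1$ while $I_i[m]$ is a function of cycle $m$, giving the independence that collapses $\mathbb{\bar{M}}[z_iI_i]/\mathbb{\bar{M}}[I_i]$ to $\mathbb{E}[z_i]$ (this is precisely the property that fails for FIFO queues, which is why the paper must import an external G/Ber/1 result in that case); and the delivery indicator $u_i(t)c_i(t)$ is independent of $z_i^S(t)$, so the delivered packet's system time inherits the stationary law of $z_i^S$ conditioned on a nonempty queue. I checked the closing algebra: with $a=\lambda_i$ and $b=p_i\mu_i$ one indeed has $\mathbb{E}[I_i^2]/(2\mathbb{E}[I_i])+r_i/(1-r_i)+1/2=1/a+1/b-1$, using the identity $1-r_i=ab\,\mathbb{E}[I_i]$. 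What the paper's computation buys is the full joint stationary distribution of $(h_i,z_i)$, which would also yield higher moments or distributional statements about the AoI; what yours buys is modularity (it reuses Proposition~\ref{prop.alternative_EWSAoI}, which is proved anyway) and a transparent reading of the answer as expected delay plus mean residual inter-delivery time.
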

\begin{proof}
Provided in the technical report \cite[Appendix C]{TechRep}. 
\end{proof}

Next, we solve the optimization problem in \eqref{eq.EWSAoI_randomized_OptimalBuffer}-\eqref{eq.EWSAoI_randomized_OptimalBuffer_2} and obtain the optimal scheduling probabilities $\{\mu_i^S\}_{i=1}^N$.

\begin{thr}\label{theo.scheduling_policy_OptimalBuffer}
Consider a network with parameters $(N,p_i,\lambda_i,w_i)$ operating under the Single packet queues discipline. The optimal scheduling probabilities are given by
\begin{equation}\label{eq.scheduling_policy_OptimalBuffer}
\mu_i^S=\frac{\sqrt{w_i/p_i}}{\sum_{j=1}^N\sqrt{w_j/p_j}} , \forall i \; ,
\end{equation}
and the performance of the Optimal Stationary Randomized policy $R^S$ is
\begin{equation}\label{eq.EWSAoI_randomized_OptimalBuffer_final}
\mathbb{E}\left[J^{R^S}\right]=\frac{1}{N} \sum_{i=1}^N w_i \left(\frac{1}{\lambda_i}-1\right)+\frac{1}{N}\left( \sum_{i=1}^N \sqrt{\frac{w_i}{p_i}}\right)^2 .
\end{equation}
Then, it follows that 
\begin{equation}\label{eq.Rand_OptimalBuffer}
\mathbb{E}\left[J^*\right]\leq \mathbb{E}\left[J^{R^S}\right]< 4\mathbb{E}\left[J^*\right] \; ,
\end{equation}
where $\mathbb{E}\left[J^*\right]=\min_{\pi\in\Pi}\mathbb{E}\left[J^\pi\right]$ is the minimum AoI over the class of all admissible policies $\Pi$.
\end{thr}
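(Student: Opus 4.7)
The plan is to handle the three assertions (the closed form \eqref{eq.scheduling_policy_OptimalBuffer}, the value \eqref{eq.EWSAoI_randomized_OptimalBuffer_final}, and the factor-4 sandwich \eqref{eq.Rand_OptimalBuffer}) in sequence, leaning on Proposition~\ref{prop.EWSAoI_randomized_OptimalBuffer} for the first two and on Theorem~\ref{theo.LowerBound} for the third.

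First I would solve the convex program \eqref{eq.EWSAoI_randomized_OptimalBuffer}--\eqref{eq.EWSAoI_randomized_OptimalBuffer_2}. Since $1/\lambda_i-1$ does not depend on $\mu_i$, this reduces to minimizing $\sum_i w_i/(p_i\mu_i)$ subject to $\sum_i\mu_i\le 1$. Both KKT and a one-line Cauchy--Schwarz argument work; I would use the latter, writing
\begin{equation*}
\Bigl(\sum_{i=1}^{N}\sqrt{w_i/p_i}\Bigr)^{2}
=\Bigl(\sum_{i=1}^{N}\sqrt{\tfrac{w_i}{p_i\mu_i}}\sqrt{\mu_i}\Bigr)^{2}
\le \Bigl(\sum_{i=1}^{N}\tfrac{w_i}{p_i\mu_i}\Bigr)\Bigl(\sum_{i=1}^{N}\mu_i\Bigr)\le \sum_{i=1}^{N}\tfrac{w_i}{p_i\mu_i},
\end{equation*}
with equality iff $\mu_i\propto\sqrt{w_i/p_i}$ and $\sum_i\mu_i=1$. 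This forces \eqref{eq.scheduling_policy_OptimalBuffer}, and substitution back into \eqref{eq.EWSAoI_randomized_OptimalBuffer} yields \eqref{eq.EWSAoI_randomized_OptimalBuffer_final}.

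For the sandwich \eqref{eq.Rand_OptimalBuffer}, the left inequality is immediate because $R^{S}\in\Pi_R\subseteq\Pi$. The interesting half is $\mathbb{E}[J^{R^{S}}]<4\mathbb{E}[J^{*}]$, which I would prove by first dominating $\mathbb{E}[J^{*}]$ from below by $L_B$ (Theorem~\ref{theo.LowerBound}) and then producing two independent lower bounds on $L_B$ and averaging them. The two bounds I have in mind are: (a) using $\hat{q}_i^{L_B}\le\lambda_i$ from \eqref{eq.LowerBound_3} termwise, giving $\sum_i w_i/\hat{q}_i^{L_B}\ge \sum_i w_i/\lambda_i$; and (b) applying Cauchy--Schwarz together with the channel constraint \eqref{eq.LowerBound_2},
\begin{equation*}
\Bigl(\sum_{i=1}^{N}\sqrt{w_i/p_i}\Bigr)^{2}
\le\Bigl(\sum_{i=1}^{N}\tfrac{w_i}{\hat{q}_i^{L_B}}\Bigr)\Bigl(\sum_{i=1}^{N}\tfrac{\hat{q}_i^{L_B}}{p_i}\Bigr)\le \sum_{i=1}^{N}\tfrac{w_i}{\hat{q}_i^{L_B}}.
\end{equation*}
Adding these and using \eqref{eq.LowerBound} gives
\begin{equation*}
2L_B\ge\frac{1}{2N}\sum_{i=1}^{N}\frac{w_i}{\lambda_i}+\frac{1}{2N}\Bigl(\sum_{i=1}^{N}\sqrt{w_i/p_i}\Bigr)^{2}+\frac{1}{N}\sum_{i=1}^{N}w_i.
\end{equation*}
Comparing the RHS with \eqref{eq.EWSAoI_randomized_OptimalBuffer_final}, which equals $\tfrac{1}{N}\bigl[\sum_i w_i/\lambda_i+(\sum_i\sqrt{w_i/p_i})^{2}-\sum_i w_i\bigr]$, I can check that $4L_B-\mathbb{E}[J^{R^{S}}]\ge\tfrac{3}{N}\sum_i w_i>0$, giving the strict inequality $\mathbb{E}[J^{R^{S}}]<4L_B\le 4\mathbb{E}[J^{*}]$.

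The only potentially delicate point is the second lower bound on $L_B$: it requires verifying that $\sum_i\hat{q}_i^{L_B}/p_i\le 1$ at the optimum, which is an active or inactive constraint depending on whether arrival rates bind in Algorithm~\ref{alg.LowerBound}. Since \eqref{eq.LowerBound_2} is a feasibility constraint of the program defining $L_B$, the inequality holds irrespective of tightness and Cauchy--Schwarz goes through unchanged. Every other step is a routine algebraic manipulation, so the whole argument collapses into solving one convex program and then carefully combining two elementary lower bounds on $L_B$.
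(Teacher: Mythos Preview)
Your proof is correct. The first half (Cauchy--Schwarz for the optimal $\mu_i^S$ and the resulting closed form \eqref{eq.EWSAoI_randomized_OptimalBuffer_final}) is exactly what the paper does.

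For the factor-4 bound your route differs from the paper's, though the ingredients are the same. The paper does not compare $\mathbb{E}[J^{R^S}]$ to $L_B$ directly; instead it constructs an auxiliary randomized policy $\tilde R$ with $\tilde\mu_i=\hat q_i^{L_B}/p_i$ (feasible because of \eqref{eq.LowerBound_2}), observes that $\mathbb{E}[J^{R^S}]\le \mathbb{E}[J^{\tilde R}]$ by optimality of $R^S$, and then bounds $\mathbb{E}[J^{\tilde R}]<4L_B$ using only $\hat q_i^{L_B}\le\lambda_i$. Your argument instead exploits the closed form \eqref{eq.EWSAoI_randomized_OptimalBuffer_final} and pulls out two explicit lower bounds on $L_B$ from the two feasibility constraints. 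Unwinding the paper's step $\mathbb{E}[J^{R^S}]\le \mathbb{E}[J^{\tilde R}]$ shows it is equivalent to your Cauchy--Schwarz bound (b), so both proofs ultimately rely on the same two inequalities and both produce the same slack $\tfrac{3}{N}\sum_i w_i$; the paper packages them through an intermediate policy while you compare closed forms directly. Your version is slightly more transparent about where each constraint is used; the paper's version has the minor advantage of not needing the explicit formula \eqref{eq.EWSAoI_randomized_OptimalBuffer_final}, only the optimality of $R^S$ within $\Pi_R$.
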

\begin{proof}
The scheduling probabilities $\{\mu_i^S\}_{i=1}^N$ that minimize \eqref{eq.EWSAoI_randomized_OptimalBuffer}-\eqref{eq.EWSAoI_randomized_OptimalBuffer_2} also minimize this equivalent problem
\begin{equation}\label{eq.EWSAoI_randomized_OptimalBuffer_equiv}
\min_{R\in\Pi_R}\left\{\frac{1}{N} \sum_{i=1}^N \frac{w_i}{p_i\mu_i}\right\} \; \mbox{ s.t. } \sum_{i=1}^N \mu_i \leq 1 \; .
\end{equation}
Consider the Cauchy - Schwarz inequality
\begin{equation}
\left(\sum_{i=1}^N \sqrt{\frac{w_i}{p_i}}\right)^2\leq\left(\sum_{i=1}^N \mu_i\right)\left(\sum_{i=1}^N \frac{w_i}{p_i\mu_i}\right) \; .
\end{equation}
The LHS is a lower bound on the objective function in \eqref{eq.EWSAoI_randomized_OptimalBuffer_equiv}. Notice that Cauchy - Schwarz holds with equality when $\{\mu_i^S\}_{i=1}^N$ is given by \eqref{eq.scheduling_policy_OptimalBuffer}, implying that \eqref{eq.scheduling_policy_OptimalBuffer} is a solution to both \eqref{eq.EWSAoI_randomized_OptimalBuffer_equiv} and \eqref{eq.EWSAoI_randomized_OptimalBuffer}-\eqref{eq.EWSAoI_randomized_OptimalBuffer_2}. Substituting the solution\footnote{The expression in \eqref{eq.scheduling_policy_OptimalBuffer} was obtained in previous work \cite{igorTON18} under the simplifying assumption of all streams always having fresh packets available for transmission. In Theorem~\ref{theo.scheduling_policy_OptimalBuffer} we show that \eqref{eq.scheduling_policy_OptimalBuffer} is in fact optimal for streams with stochastic packet arrivals and for any set of arrival rates $\{\lambda_i\}_{i=1}^N$.} $\{\mu_i^S\}_{i=1}^N$ into the objective function in \eqref{eq.EWSAoI_randomized_OptimalBuffer} gives \eqref{eq.EWSAoI_randomized_OptimalBuffer_final}. 

For deriving the upper bound in \eqref{eq.Rand_OptimalBuffer}, consider the Randomized policy $\tilde R$ with $\tilde{\mu}_i=\hat{q}_i^{L_B}/p_i,\forall i$. Substitute $\tilde{\mu}_i$ into the RHS of \eqref{eq.EWSAoI_randomized_OptimalBuffer} and denote the result as $\mathbb{E}[J^{\tilde R}]$. Comparing $L_B$ in \eqref{eq.LowerBound} with $\mathbb{E}[J^{\tilde R}]$ and noting from \eqref{eq.LowerBound_q} that $\hat{q}_i^{L_B}\leq \lambda_i$, gives that 
\begin{equation}\label{eq.Rand_OptimalBuffer_1}
\mathbb{E}\left[J^{\tilde R}\right]\leq \frac{1}{N} \sum_{i=1}^N w_i \left(\frac{2}{p_i\tilde{\mu}_i}-1\right)<4L_B \; .
\end{equation}
By definition, we know that
\begin{equation}\label{eq.Rand_OptimalBuffer_2}
L_B\leq\mathbb{E}[J^*]\leq\mathbb{E}[J^{R^S}]\leq\mathbb{E}[J^{\tilde R}] \; .
\end{equation}
Inequality \eqref{eq.Rand_OptimalBuffer} follows directly from \eqref{eq.Rand_OptimalBuffer_1} and \eqref{eq.Rand_OptimalBuffer_2}.
\end{proof}



\textbf{Intuitively, the optimal probabilities $\mathbf{\{\mu_i\}_{i=1}^N}$ should vary with the packet arrival rates $\mathbf{\{\lambda_i\}_{i=1}^N}$}. For example, consider a \emph{Single packet queue} with low arrival rate and high scheduling probability. This queue is often offered service while empty, thus wasting resources. Hence, it seems natural that the optimal $\mu_i$ should  vary with $\lambda_i$. In Secs.~\ref{sec.Random_NoBuffer} and \ref{sec.Random_FIFO}, we show that this is the case for \emph{No queues} and \emph{FIFO queues}. However, Theorem~\ref{theo.scheduling_policy_OptimalBuffer} shows that \textbf{for \emph{Single packet queues} the optimal $\mathbf{\mu_i^S}$ depends only on $\mathbf{w_i}$ and $\mathbf{p_i}$. This result is important for it simplifies the design of networked systems that attempt to minimize AoI}, as discussed in Sec.~\ref{sec.Computation}.



\subsection{Randomized Policy for No queue}\label{sec.Random_NoBuffer}
Consider a network with parameters $(N,p_i,\lambda_i,w_i)$ employing the \emph{No queue} discipline and a Stationary Randomized policy $R\in\Pi_R$ with scheduling probabilities $\mu_i$. Recall that $R$ is oblivious to packet arrivals and that, under the \emph{No queue} discipline, packets are available for transmission only during the slot in which they arrive to the system. Hence, if $R$ selects stream $i$ during slot $t$, a successful packet delivery occurs only if a packet from stream $i$ arrived at the beginning of slot $t$, i.e. $a_i(t)=1$, and the channel is ON, i.e. $c_i(t)=1$. Therefore, for the \emph{No queue} discipline, we have that  $d_i(t)=a_i(t)c_i(t)u_i(t),\forall i,t$. This is equivalent to a network with a \emph{virtual channel} that is ON with probability $p_i\lambda_i$ and OFF with probability $1-p_i\lambda_i$. We use this equivalence to derive the results that follow. 

\begin{prop}\label{prop.EWSAoI_randomized_NoBuffer}
The optimal EWSAoI achieved by a network with No queues over the class $\Pi_R$ is given by
\begin{tcolorbox}[title=Optimal Randomized policy for No queues,left=1mm,right=1mm,top=-2mm,bottom=0mm]
\begin{subequations}
\begin{align}
\mathbb{E}\left[J^{R^N}\right]=&\min_{R\in\Pi_R}\left\{\frac{1}{N} \sum_{i=1}^N \frac{w_i}{p_i\mu_i\lambda_i}\right\}\label{eq.EWSAoI_randomized_NoBuffer}\\
\mbox{s.t. } &\textstyle\sum_{i=1}^N \mu_i \leq 1 \; ;\label{eq.EWSAoI_randomized_NoBuffer_2}
\end{align}
\end{subequations}
\end{tcolorbox}
where $R^N$ denotes the Optimal Stationary Randomized policy for the No queues discipline.
\end{prop}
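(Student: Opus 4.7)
The plan is to apply Proposition~\ref{prop.alternative_EWSAoI} directly, exploiting the fact that the \emph{No queue} discipline forces every successfully delivered packet to have zero packet delay. Since packets are available for transmission only during the slot in which they arrive, the HoL packet delivered to destination $i$ at time $t_i[m]$ was generated in that same slot, so $z_i[m] = 0$ for every $m \geq 1$. This immediately makes $\bar{\mathbb{M}}[z_i I_i] = 0$, eliminating the packet-delay term from the expression in~\eqref{eq.alternative_EWSAoI}.

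Next, I would analyze the inter-delivery times $I_i[m]$. Under the Stationary Randomized policy $R \in \Pi_R$ with scheduling probability $\mu_i$, a successful delivery occurs in slot $t$ if and only if $a_i(t) = c_i(t) = u_i(t) = 1$. Since the arrivals $a_i(t)$, channels $c_i(t)$, and scheduling decisions $u_i(t)$ are mutually independent Bernoulli random variables across slots, the success indicator $d_i(t)$ is i.i.d.\ Bernoulli with parameter $q_i := p_i \mu_i \lambda_i$. Consequently, the inter-delivery times $\{I_i[m]\}_{m \geq 1}$ are i.i.d.\ Geometric$(q_i)$ with $\mathbb{E}[I_i] = 1/q_i$ and $\mathbb{E}[I_i^2] = (2 - q_i)/q_i^2$. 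By the strong law of large numbers, as $T \to \infty$, $\bar{\mathbb{M}}[I_i] \to 1/q_i$ and $\bar{\mathbb{M}}[I_i^2] \to (2-q_i)/q_i^2$ w.p.\,1, so $\bar{\mathbb{M}}[I_i^2]/\bar{\mathbb{M}}[I_i] \to (2-q_i)/q_i = 2/q_i - 1$.

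Substituting these limits into~\eqref{eq.alternative_EWSAoI} gives, almost surely,
\begin{equation*}
J^R = \sum_{i=1}^N \frac{w_i}{2N}\Bigl[\tfrac{2}{q_i} - 1 + 0 + 1\Bigr] = \frac{1}{N}\sum_{i=1}^N \frac{w_i}{p_i \mu_i \lambda_i}.
\end{equation*}
Taking the expectation (the limit is deterministic) yields $\mathbb{E}[J^R] = \frac{1}{N}\sum_{i=1}^N w_i/(p_i \mu_i \lambda_i)$. Minimizing this expression over $R \in \Pi_R$, subject to the feasibility constraint $\sum_{i=1}^N \mu_i \leq 1$ (recall that the BS may idle with probability $\mu_0 = 1 - \sum_i \mu_i$), establishes~\eqref{eq.EWSAoI_randomized_NoBuffer}--\eqref{eq.EWSAoI_randomized_NoBuffer_2}.

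The only subtle step is the passage from the almost-sure limit given by Proposition~\ref{prop.alternative_EWSAoI} to the expectation $\mathbb{E}[J^R]$ appearing in~\eqref{eq.Objective_1}; a brief uniform-integrability or bounded-convergence argument (or simply invoking the elementary renewal theorem for renewal–reward processes applied to $h_i(t)$, whose increments per renewal cycle are bounded polynomially in the cycle length with finite expectation under a Geometric distribution) suffices. Everything else is algebraic, hinging on the observation that the \emph{No queue} discipline collapses the bivariate state $(h_i(t), z_i(t))$ to an essentially one-dimensional renewal process, sparing us the Markov-chain analysis used in Proposition~\ref{prop.EWSAoI_randomized_OptimalBuffer}.
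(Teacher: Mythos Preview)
Your argument is correct, but it takes a somewhat different route from the paper's own proof. The paper observes directly that, under the \emph{No queue} discipline, every delivered packet has $z_i^N(t)=0$, so $h_i(t)$ resets to $1$ after each delivery and therefore forms a genuine renewal-reward process with i.i.d.\ Geometric$(p_i\mu_i\lambda_i)$ cycles; one application of the elementary renewal theorem for renewal-reward processes immediately gives $\lim_{T\to\infty}\frac{1}{T}\sum_{t=1}^T\mathbb{E}[h_i(t)]=1/(p_i\mu_i\lambda_i)$, and the proposition follows. You instead route the computation through Proposition~\ref{prop.alternative_EWSAoI}: you set $z_i[m]=0$ to kill the cross term, compute the first two moments of the Geometric inter-delivery time explicitly, and substitute into~\eqref{eq.alternative_EWSAoI}. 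Both arguments exploit the same renewal structure; yours is more explicit about the moment calculation (and makes transparent why the $+1$ in~\eqref{eq.alternative_EWSAoI} cancels the $-q_i$ coming from $\mathbb{E}[I_i^2]/\mathbb{E}[I_i]$), while the paper's is shorter and sidesteps the almost-sure-to-expectation issue you flag at the end, since the renewal-reward theorem already delivers the limit of $\mathbb{E}[h_i(t)]$ directly. Indeed, the alternative you mention parenthetically---``simply invoking the elementary renewal theorem for renewal--reward processes applied to $h_i(t)$''---\emph{is} the paper's proof.
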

\begin{proof}
Under the \emph{No queues} discipline, all packets are delivered with system time $z_i^N(t)=0$ and the AoI process $h_i(t)$ is renewed after every packet delivery. Hence, it follows from the elementary renewal theorem for renewal-reward processes that
\begin{equation}\label{eq.h_Noqueue}
\lim_{T\rightarrow\infty}\frac{1}{T}\sum_{t=1}^T\mathbb{E}[h_i(t)]=\frac{1}{p_i\mu_i\lambda_i}\; .
\end{equation}
Substituting \eqref{eq.h_Noqueue} into \eqref{eq.AoI_optimal} gives \eqref{eq.EWSAoI_randomized_NoBuffer}.
\end{proof}

\begin{thr}\label{theo.scheduling_policy_NoBuffer}
Consider a network with parameters $(N,p_i,\lambda_i,w_i)$ operating under the No queues discipline. The optimal scheduling probabilities are given by
\begin{equation}\label{eq.scheduling_policy_NoBuffer}
\mu_i^N=\frac{\sqrt{w_i/p_i\lambda_i}}{\sum_{j=1}^N\sqrt{w_j/p_j\lambda_j}} , \forall i \; ,
\end{equation}
and the performance of the Optimal Stationary Randomized policy $R^N$ is
\begin{equation}\label{eq.EWSAoI_randomized_NoBuffer_final}
\mathbb{E}\left[J^{R^N}\right]=\frac{1}{N}\left( \sum_{i=1}^N \sqrt{\frac{w_i}{p_i\lambda_i}}\right)^2 .
\end{equation}
\end{thr}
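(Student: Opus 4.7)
The plan is to reduce the optimization problem in \eqref{eq.EWSAoI_randomized_NoBuffer}--\eqref{eq.EWSAoI_randomized_NoBuffer_2} to exactly the same structural form as the one solved in the proof of Theorem~\ref{theo.scheduling_policy_OptimalBuffer}, and then recycle the Cauchy--Schwarz argument used there. Indeed, the only formal difference between the two problems is that the reliability $p_i$ is replaced by the effective reliability $p_i\lambda_i$ of the virtual channel described in Sec.~\ref{sec.Random_NoBuffer}. So my expectation is that the main work is to set up the bound correctly; there is no genuine obstacle beyond the bookkeeping.

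First, I would apply the Cauchy--Schwarz inequality to the vectors with components $\sqrt{\mu_i}$ and $\sqrt{w_i/(p_i\mu_i\lambda_i)}$, which gives
\begin{equation*}
\left(\sum_{i=1}^N \sqrt{\frac{w_i}{p_i\lambda_i}}\right)^{\!2} \;\leq\; \left(\sum_{i=1}^N \mu_i\right)\left(\sum_{i=1}^N \frac{w_i}{p_i\mu_i\lambda_i}\right).
\end{equation*}
Using the constraint \eqref{eq.EWSAoI_randomized_NoBuffer_2}, i.e. $\sum_i \mu_i\leq 1$, the first factor on the RHS is at most $1$, so the LHS is a valid lower bound on $N$ times the objective function in \eqref{eq.EWSAoI_randomized_NoBuffer}. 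This yields
\begin{equation*}
\mathbb{E}[J^{R^N}] \;\geq\; \frac{1}{N}\left(\sum_{i=1}^N \sqrt{\frac{w_i}{p_i\lambda_i}}\right)^{\!2},
\end{equation*}
which matches \eqref{eq.EWSAoI_randomized_NoBuffer_final}.

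Next, I would verify that the candidate $\mu_i^N$ in \eqref{eq.scheduling_policy_NoBuffer} attains this lower bound. The equality condition in Cauchy--Schwarz requires $\mu_i \propto \sqrt{w_i/(p_i\lambda_i)}$; normalizing so that $\sum_i \mu_i = 1$ (which is optimal, since decreasing any $\mu_i$ strictly increases the objective) yields precisely the expression in \eqref{eq.scheduling_policy_NoBuffer}. Substituting $\mu_i^N$ back into \eqref{eq.EWSAoI_randomized_NoBuffer} produces \eqref{eq.EWSAoI_randomized_NoBuffer_final}, completing the proof. One small sanity check I would include is that $\mu_i^N\in(0,1]$ for all $i$, so the constraint $\mu_i\in(0,1]$ from the definition of $\Pi_R$ is satisfied; this is immediate since $w_i, p_i, \lambda_i > 0$.
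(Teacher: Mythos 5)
Your proposal is correct and follows essentially the same route as the paper, which simply notes that the proof is analogous to Theorem~\ref{theo.scheduling_policy_OptimalBuffer}: the Cauchy--Schwarz bound with $p_i$ replaced by the virtual-channel reliability $p_i\lambda_i$, tightness at $\mu_i\propto\sqrt{w_i/(p_i\lambda_i)}$, and substitution back into the objective. Nothing is missing.
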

\begin{proof}
The proof is similar to Theorem~\ref{theo.scheduling_policy_OptimalBuffer}.
\end{proof}

As expected, the similarities between the Optimal Stationary Randomized policies for the \emph{No queue} and \emph{Single packet queue} disciplines increase as the packet arrival rates $\{\lambda_i\}_{i=1}^N$ increase. In particular, notice from \eqref{eq.scheduling_policy_OptimalBuffer} and \eqref{eq.scheduling_policy_NoBuffer} that $\mu_i^N=\mu_i^S,\forall i$, when $\lambda_i=1,\forall i$, and, as a result, their AoI performance is also identical, namely $\mathbb{E}\left[J^{R^N}\right]=\mathbb{E}\left[J^{R^S}\right]$ when $\lambda_i=1,\forall i$. Recall that $\mu_i^S$ does not change with $\lambda_i$.

\subsection{Randomized Policy for FIFO queue}\label{sec.Random_FIFO}
Consider a network with parameters $(N,p_i,\lambda_i,w_i)$ employing \emph{FIFO queues} and a Stationary Randomized policy $R\in\Pi_R$ with scheduling probabilities $\mu_i$. In this setting, each \emph{FIFO queue} behaves as a \emph{discrete-time Ber/Ber/1 queue} with arrival rate $\lambda_i$ and service rate $p_i\mu_i$. From \cite[Sec.~8.10]{Mor_book}, we know that the \emph{FIFO queue} is \emph{stable} when $p_i\mu_i>\lambda_i$ and that its steady-state expected backlog is given by
\begin{equation}\label{eq.backlogFIFO}
\lim_{T\rightarrow\infty}\mathbb{E}\left[Q_i(T)\right]=\frac{\lambda_i(1-p_i\mu_i)}{p_i\mu_i-\lambda_i} \; .
\end{equation}
From \cite[Theorem 5]{Rajat17}\footnote{The authors in \cite{Rajat17} obtain the minimum value of \eqref{eq.Rand_FIFOBuff_1} by jointly optimizing over scheduling probabilities $\{\mu_i^F\}_{i=1}^N$ and packet arrival rates $\{\lambda_i\}_{i=1}^N$. Theorem~\ref{theo.scheduling_policy_FIFOBuffer} generalizes this result, by providing the optimal $\{\mu_i^F\}_{i=1}^N$ for any given $\{\lambda_i\}_{i=1}^N$.}, we know that the AoI associated with a \emph{stable FIFO queue} is given by
\begin{equation}\label{eq.Rajat_paper}
\lim_{T\rightarrow\infty}\frac{1}{T}\sum_{t=1}^{T}\mathbb{E}[h_i(t)]=\frac{1}{p_i\mu_i}+\frac{1}{\lambda_i}+\left[\frac{\lambda_i}{p_i\mu_i}\right]^2\frac{1-p_i\mu_i}{p_i\mu_i-\lambda_i} \; .
\end{equation}
Notice the similarities between \eqref{eq.Rajat_paper}, the expected backlog in \eqref{eq.backlogFIFO} and the AoI associated with a \emph{Single packet queue} in \eqref{eq.EWSAoI_randomized_OptimalBuffer}. Under light load, i.e. when $\lambda_i<<p_i\mu_i$, the third term on the RHS of \eqref{eq.Rajat_paper} is small when compared to the other terms. Hence, the AoI of the \emph{FIFO queue} in \eqref{eq.Rajat_paper} is similar to the AoI of the \emph{Single packet queue} in \eqref{eq.EWSAoI_randomized_OptimalBuffer}. On the other hand, under heavy load, as $\lambda_i\rightarrow p_i\mu_i$, the third term on the RHS of \eqref{eq.Rajat_paper} dominates. Both the backlog and the AoI of the \emph{FIFO queue}, in \eqref{eq.backlogFIFO} and \eqref{eq.Rajat_paper}, respectively, increase sharply. Recall that when the backlog is large, packets have to wait for a long time in the queue before being served, what makes their information stale and, as a result, the AoI large. The \emph{Single packet queue} discipline avoids this issue by keeping only the freshest packet in the queue.


Denote by $R^F$ the Optimal Stationary Randomized policy for the case of \emph{FIFO queues} and let $\{\mu_i^F\}_{i=1}^N$ be the associated scheduling probabilities. Substituting \eqref{eq.Rajat_paper} into the expression for the EWSAoI in \eqref{eq.AoI_optimal} gives
\begin{tcolorbox}[title=Optimal Randomized policy for FIFO queues,left=1mm,right=1mm,top=-2mm,bottom=0mm]
\begin{subequations}
\begin{align}
\mathbb{E}\left[J^{R^F}\right]=&\min_{R\in\Pi_R}\left\{\sum_{i=1}^N \frac{w_i}{N} \left[\frac{1}{p_i\mu_i}+\frac{1}{\lambda_i}+\right.\right.\nonumber\\
&\phantom{xxxxxxxxx}\left.\left.+\left[\frac{\lambda_i}{p_i\mu_i}\right]^2\frac{1-p_i\mu_i}{p_i\mu_i-\lambda_i}\right]\right\}\label{eq.Rand_FIFOBuff_1}\\
\mbox{s.t. } &\textstyle\sum_{i=1}^N \mu_i \leq 1 \; ; \label{eq.Rand_FIFOBuff_2}\\
&p_i\mu_i > \lambda_i , \forall i \; .\label{eq.Rand_FIFOBuff_3}
\end{align}
\end{subequations}
\end{tcolorbox}
where \eqref{eq.Rand_FIFOBuff_2} is the constraint on scheduling decisions and \eqref{eq.Rand_FIFOBuff_3} is the condition for network stability.



\begin{remk}\label{rem.stability}
A sufficient condition for $\{\lambda_i\}_{i=1}^N$ to be within the stability region of the network is given by $\sum_{i=1}^N\lambda_i/p_i < 1$.
\end{remk}

\begin{thr}\label{theo.scheduling_policy_FIFOBuffer}
The optimal scheduling probabilities for the case of FIFO queues $\mu_i^F$ are given by Algorithm~\ref{alg.FIFO} when $\delta\rightarrow 0$.
\end{thr}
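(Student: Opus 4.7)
The plan is to apply the KKT framework to the convex program \eqref{eq.Rand_FIFOBuff_1}--\eqref{eq.Rand_FIFOBuff_3}. The first step is to verify convexity of the objective $f(\boldsymbol\mu) = \sum_i (w_i/N)\bigl[1/(p_i\mu_i) + 1/\lambda_i + (\lambda_i/p_i\mu_i)^2(1-p_i\mu_i)/(p_i\mu_i-\lambda_i)\bigr]$ on the open set $\{\boldsymbol\mu : p_i\mu_i > \lambda_i,\, \forall i\}$. Since the objective is separable in $\mu_i$, it suffices to check convexity term by term; substituting $x_i = p_i\mu_i$ and differentiating twice reduces this to verifying that $d^2/dx_i^2\bigl[1/x_i + \lambda_i^2(1-x_i)/(x_i^2(x_i-\lambda_i))\bigr] \geq 0$ on $(\lambda_i, 1]$, which is a routine algebraic check.

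Because the stability constraint \eqref{eq.Rand_FIFOBuff_3} is strict, the feasible set is open and KKT cannot be applied directly. The second step is to regularize by replacing \eqref{eq.Rand_FIFOBuff_3} with $p_i\mu_i \geq \lambda_i + \delta$ for a small $\delta > 0$. On the regularized problem the feasible set is compact and Slater's condition holds, so strong duality applies. One then writes the Lagrangian
\begin{equation*}
\mathcal L(\boldsymbol\mu,\gamma,\boldsymbol\nu) = f(\boldsymbol\mu) + \gamma\Bigl(\textstyle\sum_i \mu_i - 1\Bigr) - \sum_i \nu_i(p_i\mu_i - \lambda_i - \delta),
\end{equation*}
and the stationarity condition $\partial\mathcal L/\partial\mu_i = 0$ yields, after clearing denominators, a polynomial (cubic or quartic) in $\mu_i$ parameterized by $\gamma$ and $\nu_i$. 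By convexity and the monotonicity of $f$'s partial derivative in $\mu_i$, this equation admits a unique feasible root $\mu_i(\gamma)$ once the $\nu_i$'s are determined by complementary slackness.

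The third step is to reduce the problem to a one-dimensional search over the dual variable $\gamma \geq 0$ associated with $\sum_i\mu_i \leq 1$: as $\gamma$ increases, each $\mu_i(\gamma)$ decreases (possibly becoming active on the lower bound $\lambda_i/p_i + \delta/p_i$, in which case $\nu_i > 0$), and $\gamma^*$ is the unique value for which $\sum_i \mu_i(\gamma^*) = 1$ (or $\gamma^* = 0$ if the sum constraint is inactive). Algorithm~\ref{alg.FIFO} is presumably a bisection or line search on $\gamma$ that, for each candidate $\gamma$, solves the per-stream polynomial for $\mu_i(\gamma)$ and then adjusts $\gamma$ to meet the sum constraint to within tolerance $\delta$. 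The final step is to take $\delta \to 0$ and argue that $\mu_i^F(\delta)\to\mu_i^F$ with the limit attaining \eqref{eq.Rand_FIFOBuff_1}.

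The main obstacle will be handling the strict constraint in the limit. Specifically, one must argue that for every $\delta > 0$ small enough the regularized optimum satisfies $p_i\mu_i^F(\delta) > \lambda_i + \delta$ strictly (i.e. $\nu_i = 0$), since otherwise the $\delta^{-1}$ blow-up in the third term of $f$ would dominate and contradict optimality. Combined with continuity of $f$ on $\{p_i\mu_i > \lambda_i\}$ and compactness of the sequence $\boldsymbol\mu^F(\delta)$, this justifies passing to the limit and confirms that Algorithm~\ref{alg.FIFO} recovers the true optimum as $\delta\to 0$. The convexity check and the monotonicity of $\mu_i(\gamma)$ are the non-trivial ingredients; the remainder is standard Lagrangian bookkeeping.
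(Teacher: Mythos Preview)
Your proposal is correct and follows essentially the same approach as the paper: regularize the strict stability constraint to $p_i\mu_i\geq\lambda_i+\delta$, solve the resulting convex KKT system by a one-dimensional monotone search over the multiplier $\gamma$ associated with the sum constraint (Algorithm~\ref{alg.FIFO} does exactly this via $\mu_i=\max\{(\lambda_i+\delta)/p_i,\,g_i^{-1}(-\gamma)\}$ with $\gamma$ decreased until $\sum_i\mu_i=1$), and then argue that the boundary constraint is inactive at the optimum so that sending $\delta\to 0$ recovers the true solution. Your write-up is in fact more rigorous than the paper's sketch, which simply points to the analogous KKT analysis in Theorem~\ref{theo.LowerBound} and remarks informally that ``when $p_i\mu_i\approx\lambda_i$ the AoI performance is poor'' to justify insensitivity to $\delta$.
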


\begin{proof}
The auxiliary parameter $\delta>0$ is used to enforce a closed feasible set to the optimization problem in \eqref{eq.Rand_FIFOBuff_1}-\eqref{eq.Rand_FIFOBuff_3}. We exchange \eqref{eq.Rand_FIFOBuff_3} by $p_i\mu_i \geq \lambda_i +\delta,\forall i$, to ensure that Algorithm~\ref{alg.FIFO} always finds a unique solution to the KKT Conditions associated with \eqref{eq.Rand_FIFOBuff_1}-\eqref{eq.Rand_FIFOBuff_3} for any fixed (and arbitrarily small) value of $\delta$. Recall that when $p_i\mu_i \approx \lambda_i$ the AoI performance is poor. Hence, in most cases, the optimal scheduling probabilities $\{\mu_i^F\}_{i=1}^N$ are such that $p_i\mu_i^F$ and $\lambda_i$ are not close, meaning that small changes in $\delta$ should not affect the solution. Algorithm \ref{alg.FIFO} finds the unique solution to the KKT Conditions and is developed using a similar method as in Theorem~\ref{theo.LowerBound}.
\end{proof}
As part of Algorithm~\ref{alg.FIFO}, we use the partial derivative of \eqref{eq.Rajat_paper} with respect to $\mu_i$ multiplied by $w_i/N$, which is denoted as 
\begin{equation}\label{eq.g}
g_i(x)=\frac{w_i}{N}\left\{\frac{\lambda_i}{p_i\mu_i^2}\left[\frac{2}{p_i\mu_i}-1\right]-\frac{p_i(1-\lambda_i)}{(p_i\mu_i-\lambda_i)^2}\right\}_{x=\mu_i}
\end{equation}

\begin{algorithm}
\caption{Randomized policy for FIFO queue}\label{alg.FIFO}
\begin{algorithmic}[1]
\State $\gamma_i \gets (\lambda_i+\delta)/p_i \; , \forall i \in \{1,2,\cdots,N\}$
\State $\gamma \gets \max_{i}\{-g_i(\gamma_i)\}$ \Comment{where $g_i(.)$ is given in \eqref{eq.g}}
\State $\mu_i \gets \max\{\;\gamma_i\;;\;g_i^{-1}(-\gamma)\;\}$
\State $S \gets \mu_1+\mu_2+\cdots+\mu_N$
\While{$S<1$}
\State decrease $\gamma$ slightly
\State repeat steps 3 and 4 to update $\mu_i$ and $S$
\EndWhile
\State \textbf{return} $\mu_i^F=\mu_i,\forall i$
\end{algorithmic}
\end{algorithm}



\subsection{Comparison of Queueing Disciplines}\label{sec.Computation}
Next, we compare the performance of four different Stationary Randomized Policies: 1) Optimal Policy for \emph{Single packet queues}, $R^S$; 2) Optimal Policy for \emph{No queues}, $R^N$; 3) Optimal Policy for \emph{FIFO queues}, $R^F$; and 4) Naive Policy for \emph{FIFO queues}. The EWSAoI of the first three policies is computed using \eqref{eq.EWSAoI_randomized_OptimalBuffer_final}, \eqref{eq.EWSAoI_randomized_NoBuffer_final} and the solution to \eqref{eq.Rand_FIFOBuff_1}-\eqref{eq.Rand_FIFOBuff_3}, respectively. The Naive Policy shares resources evenly between streams by assigning $\mu_i=1/N,\forall i$. The EWSAoI of the Naive Policy is computed using the expression inside the minimization in \eqref{eq.Rand_FIFOBuff_1}.

We consider a network with two streams, $w_1=w_2=1$, $p_1=1/3$, $p_2=1$, $\lambda_1=\lambda$, $\lambda_2=\lambda/3$ and varying arrival rates $\lambda\in\{0.01,0.02,\cdots,1\}$. In Fig.~\ref{fig.Rand_Computation}, we show the EWSAoI of Randomized Policies under different queueing disciplines and display the Lower Bound $L_B$ for comparison. The policy with \emph{Single packet queues} outperforms the policies with other queueing disciplines for every arrival rate $\lambda$, as expected. 





\begin{figure}[ht!]
\begin{center}
\includegraphics[height=5cm]{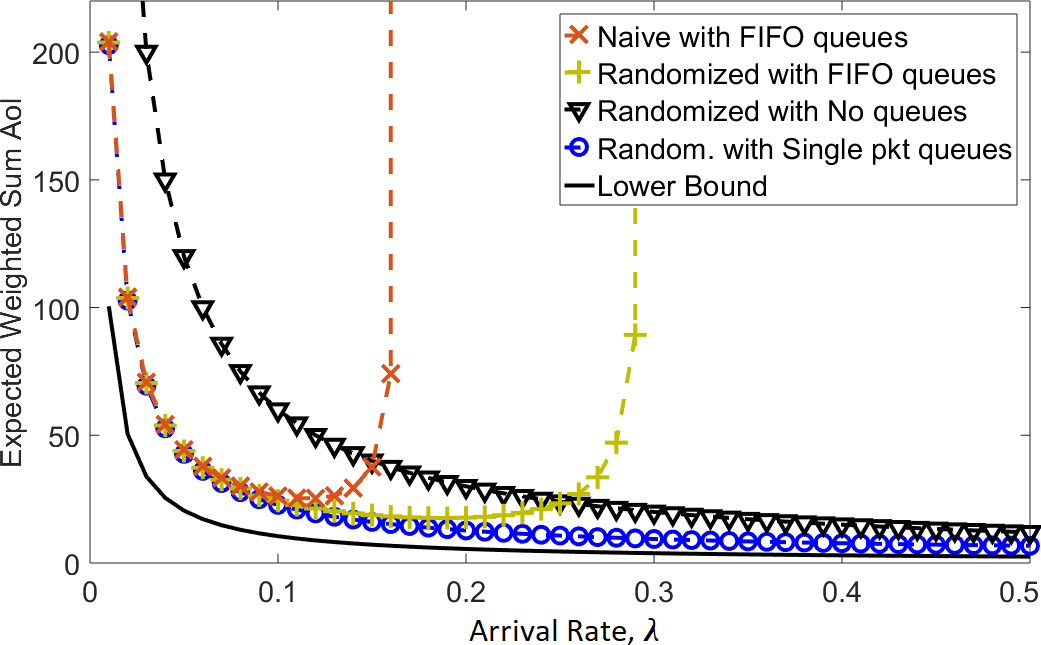}
\end{center}
\vspace{-0.2cm}
\caption{Comparison of Stationary Randomized Policies.} \label{fig.Rand_Computation}
\vspace{-0.2cm}
\end{figure}

The Optimal Policy for \emph{FIFO queues} leverages its knowledge of $p_i$ and $\lambda_i$ to stabilize the network whenever $\{\lambda_i\}_{i=1}^N$ is within the stability region. In contrast, the Naive Policy shares channel resources evenly between streams, disregarding queue stability. From Remark~\ref{rem.stability}, we know that the network can be stabilized for $\lambda<3/10$. However, in Fig.~\ref{fig.Rand_Computation}, we observe that the Naive Policy is unable to stabilize the network when $\lambda\in(1/6,3/10)$. By comparing their performances, it becomes evident that stability is critical for \emph{FIFO queues}. 

Both the \emph{Single packet queue} and the \emph{No queue} disciplines present a natural relationship between the rate at which fresh information is generated at the source $\lambda_i$ and the resulting AoI at the destination, namely a higher arrival rate (always) leads to a lower AoI. Furthermore, Theorem~\ref{theo.scheduling_policy_OptimalBuffer} shows that the optimal scheduling probabilities $\mu_i^S$ for \emph{Single packet queues} are independent of $\lambda_i$. \textbf{This result allows us to isolate the design of the arrival rate $\mathbf{\lambda_i}$ from the design of the scheduling probability $\mathbf{\mu_i}$}. In particular, to minimize the EWSAoI in the network, the arrival rates $\{\lambda_i\}_{i=1}^N$ should be set as high as possible, while the scheduling probabilities $\{\mu_i^S\}_{i=1}^N$ should be proportional to $\sqrt{w_i/p_i}$ according to \eqref{eq.scheduling_policy_OptimalBuffer}. \textbf{Since arrival rates and scheduling policies are often defined by different layers of the network stack, this isolation simplifies the design of networked systems. It is important to emphasize that this isolation only holds for networks employing \emph{Single packet queues}. For \emph{FIFO queues} and \emph{No queues} the optimal value of $\mathbf{\mu_i}$ changes for different values of $\mathbf{\lambda_i}$}. Next, we develop Age-Based Max-Weight Policies that use the knowledge of $h_i(t)$ and $z_i(t)$ for making scheduling decisions in an adaptive manner.
\section{AGE-BASED MAX-WEIGHT POLICIES}\label{sec.Max_Weight}
In this section, we use Lyapunov Optimization \cite{lyapunov} to develop Age-Based Max-Weight policies for each of the queueing disciplines. The Max-Weight policy is designed to reduce the expected drift of the Lyapunov Function at every slot $t$. In doing so, the Max-Weight policy attempts to minimize the AoI of the network.

We use the following linear Lyapunov Function
\begin{equation}\label{eq.LyapunovFunction}
L\left(\{h_i(t)\}_{i=1}^N\right)=L(t)=\frac{1}{N}\sum_{i=1}^N \beta_ih_i(t) \; ,
\end{equation}
where $\beta_i$ is a positive hyperparameter that can be used to tune the Max-Weight policy to different network configurations and queueing disciplines. The Lyapunov Drift is defined as
\begin{equation}\label{eq.LyapunovDrift}
\Delta(\mathbb{S}(t)):=\mathbb{E}\left[ \left. L(t+1) - L(t)\right| \mathbb{S}(t) \right] \; ,
\end{equation}
where $\mathbb{S}(t)=(\{h_i(t)\}_{i=1}^N,\{z_i(t)\}_{i=1}^N)$ is the network state at the beginning of time slot $t$.
The Lyapunov Function $L(t)$ increases with the AoI of the network and the Lyapunov Drift $\Delta(\mathbb{S}(t))$ represents the expected increase of $L(t)$ in one slot. Hence, by minimizing the drift in \eqref{eq.LyapunovDrift} at every slot $t$, the Max-Weight policy is attempting to keep both $L(t)$ and the network's AoI small. 

To develop the Max-Weight policy, we analyze the expression for the drift in \eqref{eq.LyapunovDrift}. Substituting the evolution of $h_i(t+1)$ from \eqref{eq.destination_AoI_1} into \eqref{eq.LyapunovDrift} and then manipulating the resulting expression, we obtain
\begin{align}\label{eq.LyapunovDrift_1}
\Delta(\mathbb{S}(t))&=\frac{1}{N}\sum_{i=1}^N\beta_i-\frac{1}{N}\sum_{i=1}^N\beta_ip_i\left(h_i(t)-z_i(t)\right)\mathbb{E}\left[ \left. u_i(t) \right| \mathbb{S}(t) \right]\; . 
\end{align}
The scheduling decision in slot $t$ affects only the second term on the RHS of \eqref{eq.LyapunovDrift_1}. For minimizing $\Delta(\mathbb{S}(t))$, the \emph{Max-Weight policy selects, in each slot $t$, the stream $i$ with a HoL packet and the highest value of} $\beta_ip_i\left(h_i(t)-z_i(t)\right)$, with ties being broken arbitrarily. The Max-Weight policy is work-conserving since it idles only when all queues are empty. 

Substituting $z_i^S(t)$, $z_i^N(t)$ and $z_i^F(t)$ into $\beta_ip_i\left(h_i(t)-z_i(t)\right)$ gives the Max-Weight policy associated with the \emph{Single packet queue}, $MW^S$, the \emph{No queue}, $MW^N$, and the \emph{FIFO queue}, $MW^F$, respectively. Notice that the difference $h_i(t)-z_i(t)$ represents the AoI reduction accrued from a successful packet delivery to destination $i$. Hence, it makes sense that the Max-Weight policy prioritizes queues with high potential reward $h_i(t)-z_i(t)$. 

\begin{thr}[Performance Bounds for $MW^S$]\label{theo.MW_bound_Optimal}
Consider a network employing Single packet queues. The performance of the Max-Weight policy with $\beta_i=w_i/p_i\mu_i^S,\forall i$, is such that
\begin{equation}\label{eq.MW_bound_Optimal}
\mathbb{E}\left[J^{MW^S}\right]\leq\mathbb{E}\left[J^{R^S}\right]\; ,
\end{equation}
where $\mu_i^S$ and $\mathbb{E}[J^{R^S}]$ are the optimal scheduling probability for the case of Single packet queues and the associated EWSAoI attained by $R^S$, respectively.
\end{thr}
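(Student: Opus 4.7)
The plan is a standard Lyapunov drift argument that exploits the fact that $MW^S$ is designed to minimize the one-step expected drift \eqref{eq.LyapunovDrift_1} among all admissible actions. First, I will set $\beta_i = w_i/(p_i \mu_i^S)$ so that $\beta_i p_i \mu_i^S = w_i$. Since $R^S$ is itself a feasible action at every state, the defining property of Max-Weight yields the pointwise inequality $\Delta^{MW^S}(\mathbb{S}(t)) \leq \Delta^{R^S}(\mathbb{S}(t))$. Substituting $\mathbb{E}[u_i^{R^S}(t)\mid\mathbb{S}(t)] = \mu_i^S\mathbf{1}_i(t)$ into \eqref{eq.LyapunovDrift_1}, where $\mathbf{1}_i(t)$ indicates that queue $i$ is nonempty at slot $t$, gives
\begin{equation*}
\Delta^{R^S}(\mathbb{S}(t)) = \frac{1}{N}\sum_{i=1}^{N}\frac{w_i}{p_i\mu_i^S} - \frac{1}{N}\sum_{i=1}^{N}w_i\bigl(h_i(t) - z_i^S(t)\bigr)\mathbf{1}_i(t).
\end{equation*}

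The decisive step, specific to Single packet queues, is to show that the indicator $\mathbf{1}_i(t)$ is redundant. Inspecting \eqref{eq.system_time_Optimal} and \eqref{eq.destination_AoI_1}, after any successful delivery $h_i$ jumps to $z_i^S+1$, at which point the two processes coincide; during the subsequent empty period, both $h_i$ and $z_i^S$ increment by $1$ each slot, and $z_i^S$ is reset to $0$ only upon the next arrival, at which instant the queue becomes nonempty again. Consequently $h_i(t) = z_i^S(t)$ whenever $\mathbf{1}_i(t)=0$, so $(h_i(t)-z_i^S(t))\mathbf{1}_i(t) = h_i(t)-z_i^S(t)$, which removes the awkward emptiness factor from the drift bound.

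The remainder is routine. Taking expectations under the state distribution induced by $MW^S$, telescoping $\sum_{t=1}^{T}\mathbb{E}[\Delta^{MW^S}(\mathbb{S}(t))] = \mathbb{E}[L(T+1)]-\mathbb{E}[L(1)]$, dividing by $T$, and letting $T\to\infty$ produces an upper bound on $\mathbb{E}[J^{MW^S}]$ in terms of the stationary mean of $z_i^S(t)$. A direct steady-state calculation for the recursion \eqref{eq.system_time_Optimal} gives $\lim_{T\to\infty}\tfrac{1}{T}\sum_{t=1}^T\mathbb{E}[z_i^S(t)] = 1/\lambda_i - 1$; combining this with $\mathbb{E}[L(T+1)]/T\to 0$ yields
\begin{equation*}
\mathbb{E}[J^{MW^S}] \leq \frac{1}{N}\sum_{i=1}^{N}w_i\!\left(\frac{1}{\lambda_i}-1+\frac{1}{p_i\mu_i^S}\right) = \mathbb{E}[J^{R^S}],
\end{equation*}
where the final equality is \eqref{eq.EWSAoI_randomized_OptimalBuffer_final}.

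The main obstacle is justifying the normalization $\mathbb{E}[L(T+1)]/T \to 0$, i.e.\ mean-stability of the AoI vector under $MW^S$. A natural route is to combine the pointwise drift inequality with the finite-AoI guarantee for $R^S$ from Proposition~\ref{prop.EWSAoI_randomized_OptimalBuffer} to show that the drift is uniformly negative outside a bounded set of states, which prevents $\mathbb{E}[h_i^{MW^S}(t)]$ from growing faster than $o(t)$ and closes the argument.
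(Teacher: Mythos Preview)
Your argument is correct and matches the paper's Lyapunov drift comparison. The only presentational difference is that the paper handles empty queues by inserting \emph{dummy packets} (with $z_i^S(t{+}1)=z_i^S(t)+1$) after each delivery so that $\mathbb{E}[u_i^{R^S}\mid\mathbb{S}]=\mu_i^S$ without an indicator, whereas you reach the same conclusion via the identity $(h_i-z_i^S)\mathbf{1}_i=h_i-z_i^S$; both devices encode the fact that a transmission from an ``empty'' queue would leave the AoI unchanged.
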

%

\begin{thr}[Performance Bounds for $MW^N$]\label{theo.MW_bound_NoBuffer}
Consider a network employing the No queues discipline. The performance of the Max-Weight Policy with $\beta_i=w_i/p_i\mu_i^N,\forall i$, is such that
\begin{equation}\label{eq.MW_bound_NoBuffer}
\mathbb{E}\left[J^{MW^N}\right]\leq\mathbb{E}\left[J^{R^N}\right]  \; ,
\end{equation}
where $\mu_i^N$ and $\mathbb{E}[J^{R^N}]$ are the optimal scheduling probability for the case of No queues and the associated EWSAoI attained by $R^N$, respectively.
\end{thr}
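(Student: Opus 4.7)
The plan is to follow the Lyapunov-drift template of Theorem~\ref{theo.MW_bound_Optimal}: since $MW^N$ chooses, at every state $\mathbb{S}(t)$, the admissible action that minimizes the one-step drift \eqref{eq.LyapunovDrift_1}, I obtain the pointwise inequality $\Delta(\mathbb{S}(t))|_{MW^N} \leq \Delta(\mathbb{S}(t))|_{R^N}$, where $R^N$ is the Optimal Stationary Randomized policy from Theorem~\ref{theo.scheduling_policy_NoBuffer}. The proof then reduces to evaluating $\mathbb{E}[\Delta|_{R^N}]$ in closed form, telescoping over time, and rearranging.

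To evaluate $\mathbb{E}[\Delta|_{R^N}]$, I would exploit the structural simplification afforded by the No queue discipline: under $R^N$ a transmission $u_i(t)=1$ requires both that the policy select stream $i$ (probability $\mu_i^N$) and that a fresh arrival occur ($a_i(t)=1$, probability $\lambda_i$), and in that event the HoL packet has $z_i^N(t)=0$. Hence $(h_i(t)-z_i^N(t))\,\mathbb{E}[u_i(t)|\mathbb{S}(t)]$ equals $h_i(t)\mu_i^N$ when $a_i(t)=1$ and is zero otherwise. Because $h_i(t)$ is determined by the history prior to slot $t$ and is therefore independent of the current arrival $a_i(t)$, taking the outer expectation yields
\begin{equation*}
\mathbb{E}\bigl[\Delta(\mathbb{S}(t))|_{R^N}\bigr] = \frac{1}{N}\sum_{i=1}^N \beta_i - \frac{1}{N}\sum_{i=1}^N \beta_i\, p_i \mu_i^N \lambda_i\, \mathbb{E}[h_i(t)] \;.
\end{equation*}
The appropriate hyperparameter choice is $\beta_i$ such that $\beta_i p_i \mu_i^N \lambda_i = w_i$, so that the coefficient multiplying $\mathbb{E}[h_i(t)]$ becomes $w_i/N$; substituting the closed form \eqref{eq.scheduling_policy_NoBuffer} for $\mu_i^N$ then collapses the additive constant $\tfrac{1}{N}\sum_i\beta_i$ to $\mathbb{E}[J^{R^N}]$ given in \eqref{eq.EWSAoI_randomized_NoBuffer_final}.

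Finally, I would telescope the resulting bound $\mathbb{E}[L(t+1)] - \mathbb{E}[L(t)] \leq \mathbb{E}[J^{R^N}] - \frac{1}{N}\sum_i w_i\, \mathbb{E}[h_i^{MW^N}(t)]$ over $t=1,\ldots,T$, discard the non-negative terminal Lyapunov value $\mathbb{E}[L(T+1)]$, divide by $T$, and let $T\to\infty$; since $\mathbb{E}[L(1)]$ is finite under the assumed initial conditions, the residual term vanishes in the Cesaro limit and yields $\mathbb{E}[J^{MW^N}]\leq\mathbb{E}[J^{R^N}]$. The main subtlety, and the step most likely to require care, is the correct bookkeeping of the independence between $h_i(t)$ and $a_i(t)$: the arrival $a_i(t)$ is encoded in $\mathbb{S}(t)$ via $z_i^N(t)$, yet is probabilistically independent of the pre-slot AoI. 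Relative to Theorem~\ref{theo.MW_bound_Optimal}, the No queue case is structurally cleaner because $z_i^N(t)$ collapses to zero whenever transmission is feasible, eliminating the intricate correlation with the queue-non-empty event that arises under the Single packet queue discipline.
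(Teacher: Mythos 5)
Your overall strategy --- compare the Max-Weight drift pointwise to the drift induced by $R^N$'s action distribution, use the independence of $h_i(t)$ from the current arrival $a_i(t)$ to convert the factor $a_i(t)$ into $\lambda_i$, then telescope and discard the terminal Lyapunov value --- is exactly the paper's template, and each of those individual steps is sound. However, there is a concrete mismatch with the statement you are asked to prove. Your bookkeeping forces $\beta_i p_i \mu_i^N \lambda_i = w_i$, i.e.\ $\beta_i = w_i/(p_i\mu_i^N\lambda_i)$, whereas the theorem specifies $\beta_i = w_i/(p_i\mu_i^N)$. This is not cosmetic: $\beta_i$ enters the Max-Weight priority index $\beta_i p_i (h_i(t)-z_i(t))$, so your choice defines a scheduler that ranks streams by $w_i h_i(t)/(\mu_i^N\lambda_i)$ rather than by $w_i h_i(t)/\mu_i^N$ --- after substituting \eqref{eq.scheduling_policy_NoBuffer}, your index is proportional to $h_i(t)\sqrt{w_i p_i/\lambda_i}$ while the theorem's is proportional to $h_i(t)\sqrt{w_i p_i \lambda_i}$, and these coincide only when all $\lambda_i$ are equal. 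Your argument therefore proves a correct bound for a \emph{different} $MW$ policy than the one in the statement; conversely, if you keep the stated $\beta_i$, the quantity your telescoping controls is $\lim_{T\to\infty}\frac{1}{TN}\sum_{t=1}^T\sum_{i=1}^N w_i\lambda_i\mathbb{E}[h_i(t)]$, which does not dominate $\mathbb{E}[J^{MW^N}]$ since $\lambda_i\leq 1$.

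The paper avoids this tension by not taking the drift of $L(t)=\frac{1}{N}\sum_i\beta_i h_i(t)$ in the original system. It passes to an equivalent system with always-fresh packets and a virtual channel that is ON with probability $p_i\lambda_i$, and uses Lyapunov weights $\hat\beta_i = w_i/(\lambda_i p_i \mu_i^N)$ there. In that system the drift-reduction coefficient is $\hat\beta_i\lambda_i p_i h_i(t)$, so the drift-minimizing index is $\hat\beta_i\lambda_i p_i h_i(t) = w_i h_i(t)/\mu_i^N$ --- precisely the index induced by the stated $\beta_i = \hat\beta_i\lambda_i = w_i/(p_i\mu_i^N)$ --- while the telescoped sum is $\sum_i \hat\beta_i\lambda_i p_i\mu_i^N\mathbb{E}[h_i(t)] = \sum_i w_i\mathbb{E}[h_i(t)]$ and the additive constant is $\frac{1}{N}\sum_i\hat\beta_i = \mathbb{E}[J^{R^N}]$, as needed. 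The lesson is that the analysis weight and the policy weight must be allowed to differ by the factor $\lambda_i$; the virtual-channel construction is what lets the paper keep the stated policy and still land on the correct objective. You should either adopt that device or explicitly acknowledge that your bound applies to the $MW$ policy with $\beta_i = w_i/(p_i\mu_i^N\lambda_i)$, which is not the one in the theorem.
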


The proofs of Theorems~\ref{theo.MW_bound_Optimal} and \ref{theo.MW_bound_NoBuffer} are provided in the technical report \cite[Appendices~D and E]{TechRep}, respectively. Both proofs rely on the construction of equivalent systems that facilitate the analysis of the expression of the drift in \eqref{eq.LyapunovDrift_1}. The performance of $MW^F$ is evaluated next using simulations. 

Stationary Randomized policies select streams randomly, according to a fixed set of scheduling probabilities $\{\mu_i\}_{i=1}^N$. In contrast, Max-Weight policies leverage the knowledge of $h_i(t)$ and $z_i(t)$ to select which stream to serve. Therefore, it is not surprising that Max-Weight policies outperform Randomized policies. However, establishing a performance guarantee as in  \eqref{eq.MW_bound_Optimal} and \eqref{eq.MW_bound_NoBuffer} is challenging for it depends on finding a tight upper bound for the performance of Max-Weight policies, which often do not have properties such as \emph{renewal intervals} that simplify the analysis.
Next, we provide numerical results that further validate the superior performance of the Max-Weight policies.



\section{NUMERICAL RESULTS}\label{sec.Simulation}

In this section, we evaluate the performance of scheduling policies in terms of the EWSAoI. We compare: i) the Optimal Stationary Randomized Policy for the case of \emph{Single packet queues} $R^S$, \emph{No queues} $R^N$ and \emph{FIFO queues} $R^F$; ii) the Max-Weight Policy\footnote{For the Max-Weight Policies $MW^S$, $MW^N$ and $MW^F$, we employ $\beta_i=w_i/p_i\mu_i^X,\forall i$, where $\mu_i^X$ is the optimal scheduling probability for the associated queueing discipline.} for the case of \emph{Single packet queues} $MW^S$, \emph{No queues} $MW^N$ and \emph{FIFO queues} $MW^F$; and iii) the Whittle's Index Policy under the \emph{No queues} discipline. The first two policies were developed in Secs.~\ref{sec.Randomized} and \ref{sec.Max_Weight}, respectively, and the last policy was proposed in \cite{YuPin18}. The Lower Bound $L_B$ derived in Sec.~\ref{sec.Lower_Bound} is displayed for comparison.

In Figs.~\ref{fig.Var_A} and \ref{fig.Var_A_1}, we simulate networks with time-horizon $T=2\times 10^6$ slots and $N=4$ traffic streams with priorities $w_1=4$, $w_2=4$, $w_3=1$, $w_4=1$, channel reliabilities $p_i=i/N, \forall i$ and arrival rates $\lambda_i=(N-i+1)/N\times \lambda$ for $\lambda\in\{0.01,0.02,\cdots,0.35\}$. The results are separated in two figures for clarity. The performance of the Randomized policies is computed using the expressions in Sec.~\ref{sec.Randomized} while the performance of the Max-Weight and Whittle's Index policies are averages over $10$ simulation runs.
    
\begin{figure}[ht!]
\begin{center}
\includegraphics[height=5.3cm]{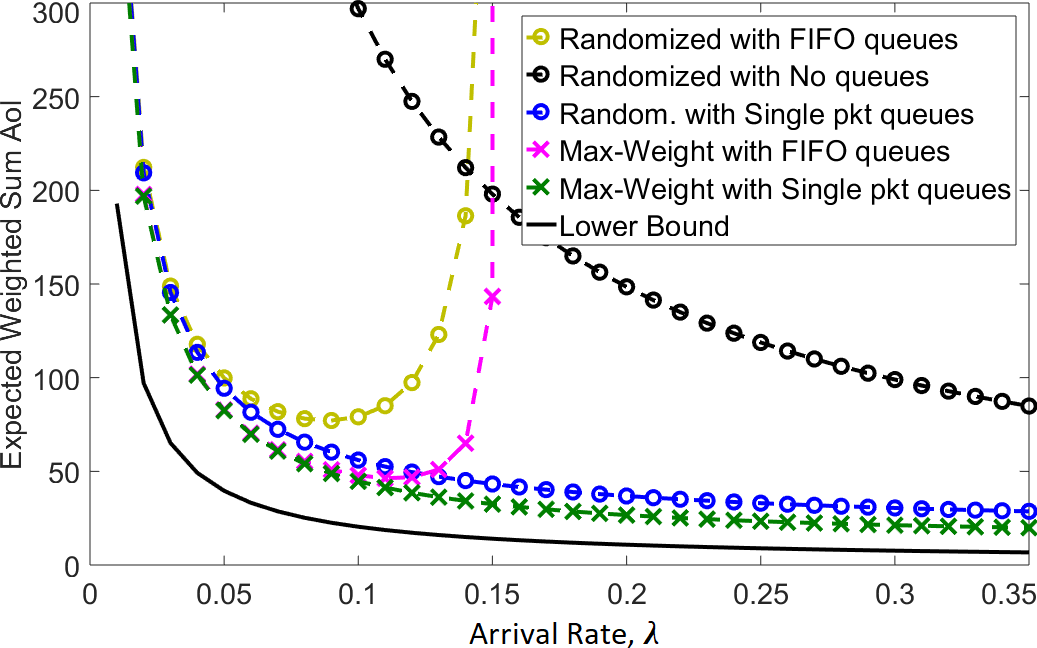}
\end{center}
\vspace{-0.2cm}
\caption{Simulation of networks with an increasing $\lambda$.} \label{fig.Var_A}
\vspace{-0.2cm}
\end{figure}

\begin{figure}[ht!]
\begin{center}
\includegraphics[height=5.3cm]{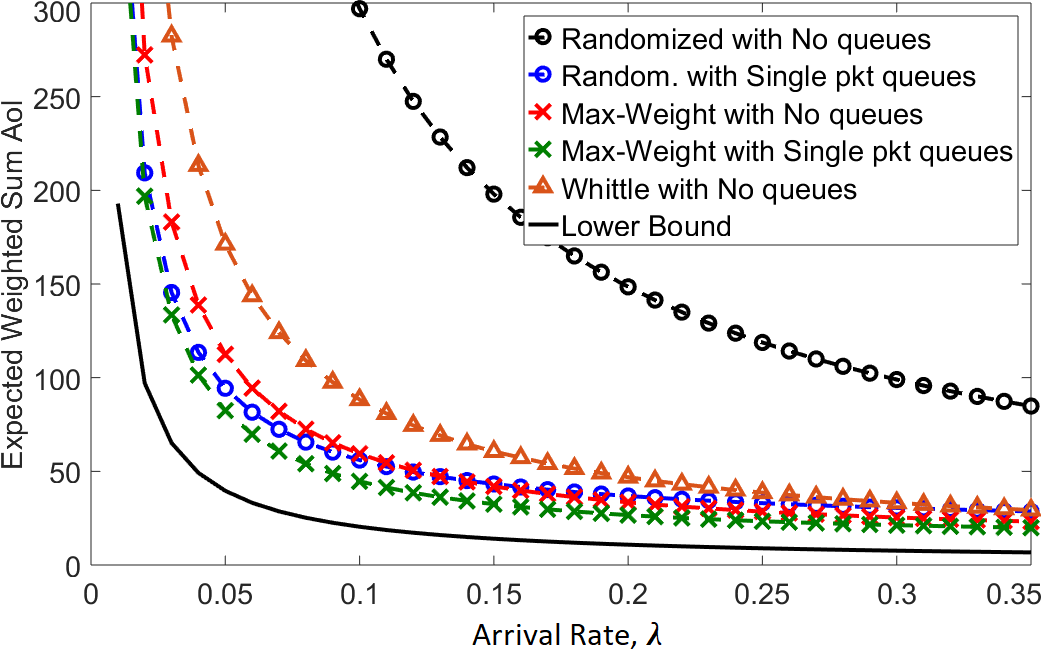}
\end{center}
\vspace{-0.2cm}
\caption{Simulation of networks with an increasing $\lambda$.} \label{fig.Var_A_1}
\vspace{-0.2cm}
\end{figure} 

The results in Figs.~\ref{fig.Var_A} and \ref{fig.Var_A_1} suggest that the Max-Weight policy outperforms the corresponding Randomized and Whittle's Index policies with the same queueing discipline for every value of $\lambda$. The results also show that under the same class of scheduling policies, \emph{Single packet queues} outperforms other queueing disciplines for every value of $\lambda$, as expected. It is evident from Fig.~\ref{fig.Var_A} that network instability, which occurs when $\lambda > 12/77$, is a major disadvantage of employing \emph{FIFO queues}.
\section{CONCLUDING REMARKS}\label{sec.Conclusion}
This paper considers a wireless network with a base station serving multiple traffic streams to different destinations. Packets from each stream arrive to the base station according to a Bernoulli process and are enqueued in separate (per stream) queues that could be of three types, namely \emph{FIFO queue}, \emph{Single packet queue} or \emph{No queue}, depending on the queueing discipline. \textcolor{black}{Notice that, from the perspective of AoI, Single packet queues are equivalent to LIFO queues.} We studied the problem of optimizing scheduling decisions with respect to the Expected Weighted Sum AoI of the network. Our main contributions include i) deriving a lower bound on the AoI performance achievable by any given network operating under any queueing discipline; ii) developing both an Optimal Stationary Randomized policy and a Max-Weight policy under each queueing discipline; and iii) evaluating the combined impact of the stochastic arrivals, queueing discipline and scheduling policy on the AoI using analytical and numerical results. 
We show that, contrary to intuition, the Optimal Stationary Randomized policy for \emph{Single packet queues} is insensitive to packet arrival rates. Simulation results show that the performance of the Age-Based Max-Weight policy for \emph{Single packet queues} is close to the analytical lower bound. Interesting extensions of this work include consideration of multi-hop networks and channels with unknown or time-varying statistics.

\section{ACKNOWLEDGMENT}
This work was supported by NSF Grants AST-1547331, CNS-1713725, and CNS-1701964, and by Army Research Office (ARO) grant number W911NF-17-1-0508.

\bibliographystyle{ACM-Reference-Format}
\bibliography{references}

\newpage
\appendix
\onecolumn
\section{Proof of Proposition \ref{prop.alternative_EWSAoI}}\label{app.alternative_EWSAoI}
\noindent \textbf{Proposition \ref{prop.alternative_EWSAoI}}. The infinite-horizon AoI objective function $J^\pi$ can be expressed as follows
\begin{equation}
J^\pi=\lim_{T\rightarrow\infty}\sum_{i=1}^N\frac{w_i}{2N}\left[\frac{\mathbb{\bar{M}}[I_i^2]}{\mathbb{\bar{M}}[I_i]}+\frac{2\mathbb{\bar{M}}[z_iI_i]}{\mathbb{\bar{M}}[I_i]}+1\right] \mbox{ w.p.1} \; ,
\end{equation}
where $I_i[m]$ is the inter-delivery time, $z_i[m]$ is the packet delay and 
\begin{equation}
\mathbb{\bar{M}}[z_iI_i]=\frac{1}{D_i(T)}\sum_{m=1}^{D_i(T)}z_i[m-1]I_i[m] \; . 
\end{equation}


\begin{proof}
Consider a network employing policy $\pi\in\Pi$ during the finite time-horizon $T$. Let $\Omega$ be the sample space associated with this network and let $\omega\in\Omega$ be a sample path. For a given sample path $\omega$, let $D_i(T)$ be the total number of packets delivered to destination $i$, $z_i[m]$ be the packet delay associated with the $m$th packet delivery, $I_i[m]$ be the number of slots between the $(m-1)$th and $m$th packet deliveries and $R_i$ be the number of slots remaining after the last packet delivery. Then, the time-horizon can be written as follows
\begin{equation}\label{eq.horizon_division}
T=\sum_{m=1}^{D_i(T)} I_i[m] + R_i, \forall i \in \{1,2,\cdots,N\} \; .
\end{equation}

The evolution of $h_i(t)$ is well-defined in each of the time intervals $I_i[m]$ and $R_i$. According to \eqref{eq.destination_AoI_1}, during the interval $I_i[m]$, the parameter $h_i(t)$ evolves as $\{z_i[m-1]+1,z_i[m-1]+2,\cdots,z_i[m-1]+I_i[m]\}$. This pattern is repeated throughout the entire time-horizon, for $m\in\{1,2,\cdots,D_i(T)\}$, and also during the last $R_i$ slots. As a result, the time-average AoI associated with destination $i$ can be expressed as
\begin{align}\label{eq.sum}
\frac{1}{T}\sum_{t=1}^{T}h_i(t)&=\frac{1}{T}\left[\sum_{m=1}^{D_i(T)}z_i[m-1]I_i[m]+\sum_{m=1}^{D_i(T)} \frac{(I_i[m]+1)I_i[m]}{2} +z_i[D_i(T)]R_i+ \frac{(R_i+1)R_i}{2}\right] \nonumber\\
&=\frac{1}{2}\left[\frac{D_i(T)}{T}\;\frac{1}{D_i(T)}\sum_{m=1}^{D_i(T)} \left(I_i^2[m]+2z_i[m-1]I_i[m]\right)+\frac{R_i^2}{T}+2\frac{z_i[D_i(T)]R_i}{T}+1\right] , \forall i \; ,
\end{align}
where the second equality uses \eqref{eq.horizon_division} to replace the two linear terms by $T$.

Combining \eqref{eq.horizon_division} with the sample mean $\bar{\mathbb{M}}[I_i]$, yields
\begin{equation}\label{eq.relation_throughput}
\frac{T}{D_i(T)}=\frac{\sum_{j=1}^{D_i(T)} I_i[j] + R_i}{D_i(T)}=\bar{\mathbb{M}}[I_i] + \frac{R_i}{D_i(T)} \; .
\end{equation}
Substituting \eqref{eq.relation_throughput} into \eqref{eq.sum} and then employing the sample mean operator $\mathbb{\bar{M}}$ on $I_i^2[m]$ and $z_i[m-1]I_i[m]$, gives
\begin{equation}\label{eq.sum1}
\frac{1}{T}\sum_{t=1}^{T}h_i(t)=\frac{1}{2}\left[\left(\bar{\mathbb{M}}[I_i] + \frac{R_i}{D_i(T)}\right)^{-1}\left(\bar{\mathbb{M}}[I_i^2]+2\bar{\mathbb{M}}[z_iI_i]\right)+\frac{R_i^2}{T}+2\frac{z_i[D_i(T)]R_i}{T}+1\right], \forall i \; ,
\end{equation}

The next step is to take the limit of \eqref{eq.sum1} as $T\rightarrow\infty$. Prior to taking the limit, we assume in the remaining part of this proof that the system time of the HoL packet in queue $i$ is finite, $z_i(t)<\infty$, as $t\rightarrow\infty$, with probability one. Recall from the discussion in Sec.~\ref{sec.QueueStability} that if $z_i(t)\rightarrow\infty$ with a positive probability, then the objective function diverges, $\mathbb{E}[J^\pi]\rightarrow\infty$. Hence, there is no loss of optimality in assuming that $z_i(t)<\infty$ with probability one. From this assumption, it follows that packet delays are finite with probability one, $z_i[m]<\infty$, and that packets are continuously delivered to destination $i$, what makes the number of slots after the last packet delivery $R_i$, finite with probability one. Hence, in the limit as $T\rightarrow\infty$, we have continuous packet deliveries, $D_i(T)\rightarrow\infty$, and finite $z_i[m]$ and $R_i$ implying that $R_i^2/T\rightarrow 0$, $R_i/D_i(T)\rightarrow 0$ and $z_i[D_i(T)]R_i/T\rightarrow 0$. Employing those limits into \eqref{eq.sum1} gives
\begin{equation}\label{eq.alternative_single}
\lim_{T\rightarrow\infty}\frac{1}{T}\sum_{t=1}^{T}h_i(t)=\lim_{T\rightarrow\infty}\left[\frac{\mathbb{\bar{M}}[I_i^2]}{2\mathbb{\bar{M}}[I_i]}+\frac{\mathbb{\bar{M}}[z_iI_i]}{\mathbb{\bar{M}}[I_i]}+\frac{1}{2}\right], \forall i \; .
\end{equation}
To obtain the final expression in \eqref{eq.alternative_EWSAoI} we employ \eqref{eq.alternative_single} into \eqref{eq.Objective_1}, without the expectation.
\end{proof}
\newpage
\section{Proof of Theorem \ref{theo.LowerBound}}\label{app.LowerBound}
\noindent \textbf{Theorem \ref{theo.LowerBound}} (Lower Bound). For any given network with parameters $(N,p_i,\lambda_i,w_i)$ and an arbitrary queueing discipline, the optimization problem in \eqref{eq.LowerBound_1}-\eqref{eq.LowerBound_3} provides a lower bound on the AoI minimization problem, namely $L_B\leq\mathbb{E}[J^*]$. The unique solution to \eqref{eq.LowerBound_1}-\eqref{eq.LowerBound_3} is given by
\begin{equation}
\hat{q}_i^{L_B}=\min{\left\{\lambda_i,\sqrt{\frac{w_ip_i}{2N\gamma^*}}\right\}}, \forall i \; ,
\end{equation}
where $\gamma^*$ yields from Algorithm~\ref{alg.LowerBound}. The lower bound is given by 
\begin{equation}
L_B=\frac{1}{2N}\sum_{i=1}^Nw_i\left(\frac{1}{\hat{q}_i^{L_B}}+1\right) \; .
\end{equation}

\begin{algorithm}
\renewcommand{\thealgorithm}{1}
\caption{Solution to the Lower Bound}
\begin{algorithmic}[1]
\State $\tilde{\gamma} \gets (\sum_{i=1}^N\sqrt{w_i/p_i})^2/(2N)$ and $\gamma_i \gets w_ip_i/2N\lambda_i^2 , \forall i$
\State $\gamma \gets \max\{\tilde{\gamma};\gamma_i\}$
\State $q_i \gets \lambda_i\min\{1 ; \sqrt{\gamma_i/\gamma}\} , \forall i$
\State $S \gets \sum_{i=1}^Nq_i/p_i$
\While{$S<1$ \textbf{and} $\gamma>0$}
\State decrease $\gamma$ slightly
\State repeat steps 4 and 5 to update $q_i$ and $S$
\EndWhile
\State \textbf{return} $\gamma^*=\gamma$ and $\hat{q}_i^{L_B}=q_i,\forall i$
\end{algorithmic}
\end{algorithm}

\begin{proof}
Consider a network with parameters $(N,p_i,\lambda_i,w_i)$ and an arbitrary queueing discipline. First, we show that \eqref{eq.LowerBound_1}-\eqref{eq.LowerBound_3} provides a lower bound $L_B$ on the AoI minimization problem $\mathbb{E}[J^*]=\min_{\pi\in\Pi}\mathbb{E}\left[J^\pi\right]$, then we find the unique solution to \eqref{eq.LowerBound_1}-\eqref{eq.LowerBound_3} by analyzing its KKT Conditions. The optimization problem in \eqref{eq.LowerBound_1}-\eqref{eq.LowerBound_3} is rewritten below for convenience.
\begin{subequations}
\begin{align*}
L_B=&\min_{\pi\in\Pi}\left\{\frac{1}{2N}\sum_{i=1}^Nw_i\left(\frac{1}{\hat{q}_i^{\pi}}+1\right)\right\} \\
\mbox{s.t. } &\textstyle\sum_{i=1}^N \hat{q}_i^{\pi}/p_i \leq 1 \; ;\\
&\hat{q}_i^{\pi} \leq \lambda_i , \forall i \; .
\end{align*}
\end{subequations}

Consider the expression for the time-average AoI associated with destination $i$ in \eqref{eq.sum}, which is valid for any admissible policy $\pi\in\Pi$ and time-horizon $T$. Substituting the non-negative terms $z_i[m-1]I_i[m]$ and $z_i[D_i(T)]R_i$ by zero, employing the sample mean operator $\mathbb{\bar{M}}$ to $I_i^2[m]$ and then applying Jensen's inequality $\mathbb{\bar{M}}[I_i^2]\geq(\mathbb{\bar{M}}[I_i])^2$, we obtain
\begin{equation}\label{eq.sum2}
\frac{1}{T}\sum_{t=1}^{T}h_i(t) \geq  \frac{1}{2}\left(\frac{D_i(T)}{T}\left(\bar{\mathbb{M}}[I_i]\right)^2+\frac{R_i^2}{T}+1\right) \; .
\end{equation}
Substituting \eqref{eq.relation_throughput} into \eqref{eq.sum2}, gives
\begin{equation}\label{eq.sum3}
\frac{1}{T}\sum_{t=1}^{T}h_i(t) \geq  \frac{1}{2}\left(\frac{1}{T}\frac{(T-R_i)^2}{D_i(T)}+\frac{R_i^2}{T}+1\right) \; .
\end{equation}
By minimizing the LHS of \eqref{eq.sum3} analytically with respect to the variable $R_i$, we have
\begin{equation}\label{eq.sum4}
\frac{1}{T}\sum_{t=1}^{T}h_i(t) \geq  \frac{1}{2}\left(\frac{T}{D_i(T)+1}+1\right) \; .
\end{equation}
Taking the expectation of \eqref{eq.sum4} and applying Jensen's inequality, yields
\begin{equation}\label{eq.sum5}
\frac{1}{T}\sum_{t=1}^{T}\mathbb{E}\left[h_i(t)\right] \geq  \frac{1}{2}\left(\frac{1}{\mathbb{E}\left[\displaystyle\frac{D_i(T)}{T}\right]+\displaystyle\frac{1}{T}}+1\right) \; .
\end{equation}
Applying the limit $T\rightarrow\infty$ to \eqref{eq.sum5} and using the definition of throughput in \eqref{eq.throughput}, gives
\begin{equation}\label{eq.sum6}
\lim_{T\rightarrow\infty}\frac{1}{T}\sum_{t=1}^{T}\mathbb{E}\left[h_i(t)\right] \geq  \frac{1}{2}\left(\frac{1}{\hat{q}_i^\pi}+1\right) \; .
\end{equation}
Substituting \eqref{eq.sum6} into the objective function in \eqref{eq.Objective_1}, yields 
\begin{align}\label{eq.obj1}
\mathbb{E}\left[J^{\pi}\right]=&\lim_{T\rightarrow\infty}\frac{1}{N}\sum_{i=1}^N\frac{w_i}{T}\sum_{t=1}^{T}\mathbb{E}\left[ h_i(t) \right]\nonumber \\
\geq&\frac{1}{2N}\sum_{i=1}^N w_i \left(\frac{1}{\hat{q}_i^\pi}+1\right)\; .
\end{align}
Inequality \eqref{eq.obj1} is valid for any admissible policy $\pi\in\Pi$. Notice that the RHS of \eqref{eq.obj1} depends only on the network's long-term throughput $\{\hat{q}_i^{\pi}\}_{i=1}^N$. Adding to \eqref{eq.obj1} the two necessary conditions for the long-term throughput in \eqref{eq.throughput_condition1} and \eqref{eq.throughput_condition2}, and then minimizing the resulting problem over all policies in $\Pi$, yields $\mathbb{E}[J^*]=\min_{\pi\in\Pi}\mathbb{E}\left[J^\pi\right]\geq L_B$ where $L_B$ is given by \eqref{eq.LowerBound_1}-\eqref{eq.LowerBound_3}.

After showing that \eqref{eq.LowerBound_1}-\eqref{eq.LowerBound_3} provides a lower bound on the AoI minimization problem, we find the unique set of network's long-term throughput $\{\hat{q}_i^{L_B}\}_{i=1}^N$ that solves \eqref{eq.LowerBound_1}-\eqref{eq.LowerBound_3} by analyzing its KKT Conditions. Let $\gamma$ be the KKT multiplier associated with the relaxation of $\sum_{i=1}^N \hat{q}_i^{\pi}/p_i \leq 1$ and $\{\zeta_i\}_{i=1}^{N}$ be the KKT multipliers associated with the relaxation of $\hat{q}_i^{\pi} \leq \lambda_i , \forall i$. Then, for $\gamma\geq 0$ , $\zeta_i\geq 0$ and $\hat{q}_i^\pi\in(0,1], \forall i$, we define
\begin{align}
\mathcal{L}(\hat{q}_i^\pi,&\zeta_i,\gamma)=\frac{1}{2N}\sum_{i=1}^Nw_i\left(\frac{1}{\hat{q}_i^\pi}+1\right)+\sum_{i=1}^N\zeta_i\left(\hat{q}_i^\pi-\lambda_i\right)+\gamma\left(\sum_{i=1}^N\frac{\hat{q}_i^{\pi}}{p_i}-1\right) \; ,
\end{align}
and, otherwise, we define $\mathcal{L}(\hat{q}_i^\pi,\zeta_i,\gamma)=+\infty$. Then, the KKT Conditions are
\begin{enumerate}[(i)]
\item Stationarity: $\nabla_{\hat{q}_i^\pi}\mathcal{L}(\hat{q}_i^\pi,\zeta_i,\gamma)=0$;
\item Complementary Slackness: $\gamma(\sum_{i=1}^N\hat{q}_i^{\pi}/p_i-1)=0$;
\item Complementary Slackness: $\zeta_i(\hat{q}_i^\pi-\lambda_i)=0, \forall i$;
\item Primal Feasibility: $\hat{q}_i^{\pi} \leq \lambda_i , \forall i$, and $\sum_{i=1}^N \hat{q}_i^{\pi}/p_i \leq 1$;
\item Dual Feasibility: $\zeta_i\geq 0, \forall i$, and $\gamma\geq 0$.
\end{enumerate}
Since $\mathcal{L}(\hat{q}_i^\pi,\zeta_i,\gamma)$ is a \emph{convex function}, if there exists a vector $(\{\hat{q}_i^{L_B}\}_{i=1}^N,\{\zeta_i^*\}_{i=1}^N,\gamma^*)$ that satisfies all KKT Conditions, then this vector is unique. 
Next, we find the vector $(\{\hat{q}_i^{L_B}\}_{i=1}^N,\{\zeta_i^*\}_{i=1}^N,\gamma^*)$.

To assess stationarity, $\nabla_{\hat{q}_i^\pi}\mathcal{L}(\hat{q}_i^\pi,\zeta_i,\gamma)=0$, we calculate the partial derivative of $\mathcal{L}(\hat{q}_i^\pi,\zeta_i,\gamma)$ with respect to $\hat{q}_i^\pi$, which gives
\begin{equation}\label{eq.KKT_equation1}
-\frac{w_ip_i}{2N(\hat{q}_i^\pi)^2}+\zeta_ip_i+\gamma=0 \; , \forall i \; .
\end{equation}

From complementary slackness, $\gamma(\sum_{i=1}^N\hat{q}_i^{\pi}/p_i-1)=0$, we know that either $\gamma=0$ or $\sum_{i=1}^N\hat{q}_i^{\pi}/p_i=1$. First, we consider the case $\sum_{i=1}^N\hat{q}_i^{\pi}/p_i=1$. Based on dual feasibility, $\zeta_i\geq 0$, we can separate streams $i \in \{1,\cdots,N\}$ into two categories: streams with $\zeta_i>0$ and streams with $\zeta_i=0$.\\
\noindent\emph{Category 1)} stream $i$ with $\zeta_i > 0$. It follows from complementary slackness, $\zeta_i(\hat{q}_i^\pi-\lambda_i)=0$, that $\hat{q}_i^\pi= \lambda_i$. Plugging this value of $\hat{q}_i^\pi$ into \eqref{eq.KKT_equation1} gives the inequality $\zeta_ip_i=\gamma_i-\gamma> 0$, where we define the constant
\begin{equation}\label{eq.KKT_equation3}
\gamma_i:=\frac{w_ip_i}{2N\lambda_i^2} \; .
\end{equation}
\noindent\emph{Category 2)} stream $i$ with $\zeta_i=0$. It follows from \eqref{eq.KKT_equation1} that
\begin{equation}\label{eq.KKT_solution2}
\gamma=\gamma_i\left(\frac{\lambda_i}{\hat{q}_i^\pi}\right)^2 \Rightarrow \; \hat{q}_i^\pi= \lambda_i\sqrt{\frac{\gamma_i}{\gamma}}\; , \mbox{for } \gamma_i-\gamma\leq 0 \; .
\end{equation}
Hence, for any fixed value of $\gamma\geq 0$, if $\gamma\geq\gamma_i$ then stream $i$ is in Category 2, otherwise, stream $i$ is in Category 1. Moreover, the values of $\zeta_i$ and $\hat{q}_i^\pi$ associated with stream $i$, in either Category, can be expressed as
\begin{align}
\zeta_i=\max\left\{0 ; \frac{\gamma_i-\gamma}{p_i}\right\} ,\forall i \; . \label{eq.lambda} \\
\hat{q}_i^\pi=\lambda_i\min\left\{1 ; \sqrt{\frac{\gamma_i}{\gamma}}\right\} ,\forall i \; . \label{eq.mu}
\end{align}
Notice that when $\gamma>\max\{\gamma_i\}$, then all streams are in Category 2 and $\hat{q}_i^\pi< \lambda_i,\forall i$. By decreasing the value of $\gamma$ gradually, the throughput $\hat{q}_i^\pi$ of each stream $i$ in \eqref{eq.mu} either increases or remain fixed at $\lambda_i$. Our goal is to find the value of $\gamma^*$ which yields $\{\hat{q}_i^{\pi}\}_{i=1}^N$ satisfying the condition $\sum_{i=1}^N\hat{q}_i^{\pi}/p_i=1$. Suppose this condition is satisfied when $\gamma>\max\{\gamma_i\}$, with all streams in Category 2, then it follows that
\begin{align}
&\sum_{i=1}^N\frac{\hat{q}_i^\pi}{p_i}=\sum_{i=1}^N\frac{\lambda_i}{p_i}\sqrt{\frac{\gamma_i}{\gamma}}=\frac{1}{\sqrt{2N\gamma}}\sum_{i=1}^N\sqrt{\frac{w_i}{p_i}}=1 \Rightarrow\nonumber\\
& \Rightarrow \gamma^*=\tilde{\gamma}:=\frac{1}{2N}\left(\sum_{i=1}^N\sqrt{\frac{w_i}{p_i}}\right)^2 \; ,
\label{eq.condition1}
\end{align}
where $\tilde{\gamma}$ is a fixed constant and the solution is unique $\gamma^*=\tilde{\gamma}$.

Alternatively, suppose that $\sum_{i=1}^N\hat{q}_i^{\pi}/p_i=1$ is satisfied when $\min\{\gamma_i\}\leq\gamma\leq\max\{\gamma_i\}$, with some streams in Category 1 and others in Category 2. To find $\gamma^*$, we start with $\gamma=\max\{\gamma_i\}$ and gradually decrease $\gamma$, adjusting $\{\hat{q}_i^\pi\}_{i=1}^N$ according to \eqref{eq.mu} until we reach $\sum_{i=1}^N\hat{q}_i^{\pi}/p_i=1$. The uniqueness of $\gamma^*$ follows from the monotonicity of $\hat{q}_i^\pi$ with respect to $\gamma$ in \eqref{eq.mu}.

Another possibility is for $\gamma$ to reach a value lower than $\min\{\gamma_i\}$ and still result in $\sum_{i=1}^N\hat{q}_i^{\pi}/p_i<1$. Notice from \eqref{eq.mu} that when $\gamma<\min\{\gamma_i\}$, then all streams are in Category 1 and have maximum throughputs, namely $\hat{q}_i^{\pi}=\lambda_i,\forall i$. It follows that $\sum_{i=1}^N\hat{q}_i^{\pi}/p_i=\sum_{i=1}^N\lambda_i/p_i<1$, in which case the condition $\sum_{i=1}^N\hat{q}_i^{\pi}/p_i=1$ cannot be satisfied for any value of $\gamma\geq 0$. Hence, from complementary slackness, $\gamma(\sum_{i=1}^N\hat{q}_i^{\pi}/p_i-1)=0$, we have the unique solution $\gamma^*=0$.

\emph{Proposed algorithm to find }$\gamma^*$ that solves the KKT Conditions: start with $\gamma=\max\{\gamma_i;\tilde{\gamma}\}$. Then, compute $\{\hat{q}_i^\pi\}_{i=1}^N$ using \eqref{eq.mu} and verify if the condition $\sum_{i=1}^N\hat{q}_i^{\pi}/p_i=1$ is satisfied. If $\sum_{i=1}^N\hat{q}_i^{\pi}/p_i<1$, then gradually decrease $\gamma$ and repeat the procedure. Stop when $\sum_{i=1}^N\hat{q}_i^{\pi}/p_i=1$ or when $\gamma<\min\{\gamma_i\}$. If $\sum_{i=1}^N\hat{q}_i^{\pi}/p_i=1$ holds, then assign $\gamma^*\leftarrow\gamma$. Otherwise, if $\gamma<\min\{\gamma_i\}$ holds, then assign $\gamma^*\leftarrow 0$. The solution to the KKT Conditions is given by $\gamma^*$ and the associated $\zeta_i^*$ and $\hat{q}_i^{L_B}$ obtained by substituting $\gamma^*$ into \eqref{eq.lambda} and \eqref{eq.mu}, respectively.

It is evident from the proposed algorithm that for any given network with parameters $(N,p_i,\lambda_i,w_i)$ and an arbitrary queueing discipline, the solution to the KKT Conditions, $(\{\hat{q}_i^{L_B}\}_{i=1}^N,\{\zeta_i^*\}_{i=1}^N,\gamma^*)$, exists and is unique. The proposed algorithm is described using pseudocode in Algorithm~\ref{alg.LowerBound}.

\end{proof}
\newpage
\section{Proof of Proposition \ref{prop.EWSAoI_randomized_OptimalBuffer}}\label{app.EWSAoI_randomized_OptimalBuffer}
\noindent \textbf{Proposition \ref{prop.EWSAoI_randomized_OptimalBuffer}}. The optimal EWSAoI achieved by a network with Single packet queues over the class $\Pi_R$ is given by
\begin{tcolorbox}[title=Optimal Randomized policy for Single packet queues,left=1mm,right=1mm,top=-2mm,bottom=0mm]
\begin{subequations}
\begin{align}
\mathbb{E}\left[J^{R^S}\right]=&\min_{R\in\Pi_R}\left\{\frac{1}{N} \sum_{i=1}^N w_i \left(\frac{1}{\lambda_i}-1+\frac{1}{p_i\mu_i}\right)\right\} \\
\mbox{s.t. } &\textstyle\sum_{i=1}^N \mu_i \leq 1 \; ;
\end{align}
\end{subequations}
\end{tcolorbox}
\noindent where $R^S$ denotes the Optimal Stationary Randomized Policy for the Single packet queue discipline.

\begin{proof}
Consider the evolution of $h_i(t)$ and $z_i^S(t)$ given in \eqref{eq.destination_AoI_1} and \eqref{eq.system_time_Optimal}, respectively. Under policy $R\in\Pi_R$, the tuple $(h_i(t),z_i^S(t))$ 
evolves according to a two-dimensional Markov Chain with countably-infinite state space that fully characterizes the state of stream $i$. The basic structure of this Markov Chain is illustrated in Fig.~\ref{fig.Two_Dim_MC}.

\begin{figure}[ht!]
\begin{center}
\includegraphics[height=4cm]{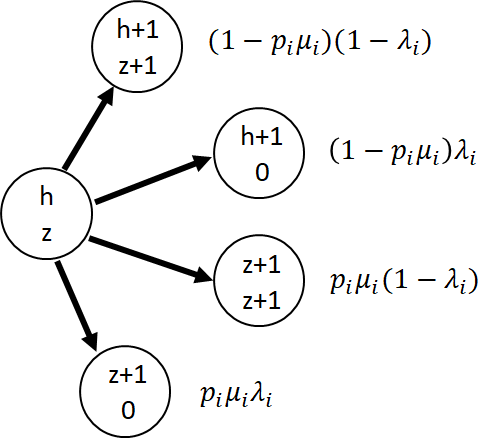}
\end{center}
\caption{Illustration of the state evolution associated with stream $i$ of a network employing policy $R\in\Pi_R$ and operating under the Single packet queue discipline. In partciular, we show the outgoing transition arcs from any given state $(h_i(t),z_i(t))=(h,z),\forall z\in\{0,1,2,\cdots\},h\geq z$, with the associated transition probabilities.} \label{fig.Two_Dim_MC}
\end{figure} 

To obtain the stationary distribution of stream $i$'s Markov Chain, we separate its state transitions into three categories and derive the associated probability distributions.
\begin{itemize}
\item Transition to state $(h,h),\forall h\in\{1,2,\cdots\}$, following a successful packet transmission and no packet arrival (i.e. transition to an empty\footnote{When the queue is empty, the system time of the HoL packet $z$ is not part of the network state. However, to facilitate the analysis, we assume in this proof that $z$ is always part of the network state and evolves according to \eqref{eq.system_time_Optimal}.} queue):
\begin{equation}\label{eq.trans_to_empty}
\mathbb{P}(h,h)=\mathbb{P}(1,0)\frac{(1-\lambda_i)^h}{\lambda_i}\left\{\frac{1-(1-p_i\mu_i)^h}{p_i\mu_i}\right\} \; ;
\end{equation}
\item Transition to state $(h,0),\forall h\in\{1,2,\cdots\}$, following a packet arrival:
\begin{equation}\label{eq.trans_to_arrival}
\mathbb{P}(h,0)=\mathbb{P}(1,0)\left\{\sum_{n=0}^{h-1}(1-\lambda_i)^{h-1-n}(1-p_i\mu_i)^{n}\right\} \; ;
\end{equation}
\item Uneventful transition to state $(h,z),\forall z\in\{1,2,\cdots\},h>z$:
\begin{align}\label{eq.uneventful_trans}
\mathbb{P}(h,z)&=\mathbb{P}(h-z,0)(1-\lambda_i)^z(1-p_i\mu_i)^z\nonumber\\
				&=\mathbb{P}(1,0)(1-\lambda_i)^z(1-p_i\mu_i)^z\left\{\sum_{n=0}^{h-z-1}(1-\lambda_i)^{h-z-1-n}(1-p_i\mu_i)^{n}\right\}\; .
\end{align}
\end{itemize}
Notice that \eqref{eq.trans_to_empty}, \eqref{eq.trans_to_arrival} and \eqref{eq.uneventful_trans} comprehend all possible state transitions. With the probability distributions, we obtain an expression for the probability of the event $h_i(t)=h$
\begin{align}
\mathbb{P}(h)=\sum_{z=0}^{h}\mathbb{P}(h,z)=\frac{\mathbb{P}(1,0)}{\lambda_i}\left[\sum_{n=0}^{h-1}(1-\lambda_i)^{h-1-n}(1-p_i\mu_i)^n\right] \; ,
\end{align} 
for $h\geq 1$. Moreover, since $\sum_h\mathbb{P}(h)=1$, we have that $\mathbb{P}(1,0)=\lambda_i^2p_i\mu_i$.

The countable-state Markov Chain is irreducible and has a stationary distribution, hence this distribution is unique, the chain is positive recurrent and
\begin{equation}\label{eq.stationary_h}
\lim_{T\rightarrow\infty}\frac{1}{T}\sum_{t=1}^T\mathbb{E}[h_i(t)]=\mathbb{E}[h]=\sum_{h} h\mathbb{P}(h)=\frac{1}{p_i\mu_i}+\frac{1}{\lambda_i}-1 \; .
\end{equation}
Proposition~\ref{prop.EWSAoI_randomized_OptimalBuffer} follows from substituting \eqref{eq.stationary_h} into the objective function in \eqref{eq.AoI_optimal}.
\end{proof}

\newpage
\section{Proof of Theorem \ref{theo.MW_bound_Optimal}}\label{app.MW_bound_Optimal}
\noindent \textbf{Theorem \ref{theo.MW_bound_Optimal}} (Performance Bounds for $MW^S$). Consider a network employing Single packet queues. The performance of the Max-Weight policy with $\beta_i=w_i/p_i\mu_i^S,\forall i$, is such that
\begin{equation}
\mathbb{E}\left[J^{MW^S}\right]\leq\mathbb{E}\left[J^{R^S}\right]\; ,
\end{equation}
where $\mu_i^S$ and $\mathbb{E}[J^{R^S}]$ are the optimal scheduling probability for the case of Single packet queues and the associated EWSAoI attained by $R^S$, respectively.

\begin{proof}
Consider stream $i$ from a network operating under the \emph{Single packet queue} discipline. In each slot $t$, a packet is transmitted, i.e. $u_i(t)=1$, if the stream is selected and its queue is non-empty. Hence, packet transmissions $u_i(t)$ depend on the queue backlog. To decouple packet transmissions from the queue backlog, we create \emph{dummy packets} that can be transmitted without affecting the AoI. In particular, suppose that at time $t$ queue $i$ is selected and successfully transmits a packet with $z_i^S(t)=z$. Then, at the beginning of slot $t+1$, with probability $1-\lambda_i$ we place a \emph{dummy packet} with $z_i^S(t+1)=z+1$ at the HoL of the queue, otherwise we place a real packet with $z_i^S(t)=0$. 
From that moment on, the behavior of dummy packets is indistinguishable from real packets. Notice that due to the choice of $z_i^S(t+1)=z+1$, when a dummy packet is delivered to the destination, it does not change the associated AoI. Moreover, the system time $z_i^S(t)$ is now defined at every slot $t$ following \eqref{eq.system_time_Optimal}. Next, we analyze the equivalent system with dummy packets.

The Age-Based Max-Weight policy minimizes the drift in \eqref{eq.LyapunovDrift_1}. Hence, any other policy $\pi \in \Pi$ yields a higher (or equal) value of $\Delta(\mathbb{S}(t))$. Consider the Stationary Randomized policy for Single packet queues defined in Sec.~\ref{sec.Random_OptimalBuffer} with scheduling probability $\mu_i^S$ and let 
\begin{equation}\label{eq.uis}
\mathbb{E}\left[u_i(t)|\mathbb{S}(t)\right]=\mathbb{E}\left[u_i\right]=\mu_i^S \; .
\end{equation}
Substituting $\mu_i^S$ into the Lyapunov Drift gives the upper bound
\begin{align}\label{eq.drift_first}
\Delta(\mathbb{S}(t))\leq\frac{1}{N}\sum_{i=1}^N\beta_i-\frac{1}{N}\sum_{i=1}^N\beta_ip_i\left(h_i(t)-z_i^S(t)\right)\mu_i^S \; . 
\end{align}

Now, taking the expectation with respect to $\mathbb{S}(t)$ and then the time-average on the interval $t\in\{1,2,\cdots,T\}$ yields
\begin{align}
\frac{\mathbb{E}\left[L(T+1)\right]}{T}-\frac{\mathbb{E}\left[L(1)\right]}{T}\leq \frac{1}{N}\sum_{i=1}^N\beta_i-\frac{1}{TN}\sum_{t=1}^T\sum_{i=1}^N\beta_ip_i\mathbb{E}\left[h_i(t)-z_i^S(t)\right]\mu_i^S \; .
\end{align}
Manipulating this expression, assigning $\beta_i=w_i/p_i\mu_i^S$ and taking the limit as $T\rightarrow\infty$, gives
\begin{align}\label{eq.MW_perf}
\mathbb{E}\left[J^{MW^S}\right]\leq&\frac{1}{N}\sum_{i=1}^N\frac{w_i}{p_i\mu_i^S}+\lim_{T\rightarrow\infty}\frac{1}{TN}\sum_{i=1}^N\sum_{t=1}^Tw_i\mathbb{E}\left[z_i^S(t)\right] \; .
\end{align}
From the evolution of $z_i^S(t)$ in \eqref{eq.system_time_Optimal}, we know that
\begin{equation}\label{eq.z_Singlequeue}
\lim_{T\rightarrow\infty}\frac{1}{TN}\sum_{i=1}^N\sum_{t=1}^Tw_i\mathbb{E}\left[z_i^S(t)\right]=\frac{1}{N}\sum_{i=1}^Nw_i\left(\frac{1}{\lambda_i}-1\right) \; .
\end{equation}
Substituting \eqref{eq.z_Singlequeue} into \eqref{eq.MW_perf} and then comparing the result with \eqref{eq.EWSAoI_randomized_OptimalBuffer} yields
\begin{align}\label{eq.MW_perf_2}
\mathbb{E}\left[J^{MW^S}\right]\leq\frac{1}{N}\sum_{i=1}^N\frac{w_i}{p_i\mu_i^S}+\frac{1}{N}\sum_{i=1}^Nw_i\left(\frac{1}{\lambda_i}-1\right)=\mathbb{E}\left[J^{R^S}\right] \; .
\end{align}

\end{proof}

\newpage
\section{Proof of Theorem \ref{theo.MW_bound_NoBuffer}}\label{app.MW_bound_NoBuffer}
\noindent \textbf{Theorem \ref{theo.MW_bound_NoBuffer}} (Performance Bounds for $MW^N$). Consider a network employing the No queues discipline. The performance of the Max-Weight Policy with $\beta_i=w_i/p_i\mu_i^N,\forall i$, is such that
\begin{equation}
\mathbb{E}\left[J^{MW^N}\right]\leq\mathbb{E}\left[J^{R^N}\right]  \; ,
\end{equation}
where $\mu_i^N$ and $\mathbb{E}[J^{R^N}]$ are the optimal scheduling probability for the case of No queues and the associated EWSAoI attained by $R^N$, respectively.

\begin{proof}
Consider stream $i$ from a network operating under the \emph{No queue} discipline. In each slot $t$, a packet is successfully transmitted, i.e. $d_i(t)=1$, if a packet arrives, the stream is selected and the channel is ON. Notice that all delivered packets have $z_i^N(t)=0$. This is equivalent to a network with packets that are always fresh, i.e. $z_i^N(t)=0,\forall i,t$, and with a virtual channel that is ON with probability $p_i\lambda_i$ and OFF with probability $1-p_i\lambda_i$. 
The Lyapunov Drift for this equivalent system with fresh packets and virtual channels is given by:
\begin{align}\label{eq.LyapunovDrift_AppE}
\Delta(\mathbb{S}(t))=\frac{1}{N}\sum_{i=1}^N\hat\beta_i-\frac{1}{N}\sum_{i=1}^N\hat\beta_i\lambda_ip_ih_i(t)\mathbb{E}\left[ \left. u_i(t) \right| \mathbb{S}(t) \right]\; . 
\end{align}
For minimizing $\Delta(\mathbb{S}(t))$, the \emph{Max-Weight policy selects, in each slot $t$, the stream $i$ with a HoL packet and the highest value of} $\hat\beta_i\lambda_ip_ih_i(t)$, with ties being broken arbitrarily. By comparing the drift of the equivalent system \eqref{eq.LyapunovDrift_AppE} and the original system \eqref{eq.LyapunovDrift_1}, it is easy to see that $\beta_i=\hat\beta_i\lambda_i$.

The Age-Based Max-Weight policy minimizes the drift in \eqref{eq.LyapunovDrift_AppE}. Hence, any other policy $\pi \in \Pi$ yields a higher (or equal) value of $\Delta(\mathbb{S}(t))$. Consider the Stationary Randomized policy for No queues defined in Sec.~\ref{sec.Random_NoBuffer} with scheduling probability $\mu_i^N$ and let 
\begin{equation}\label{eq.uiN}
\mathbb{E}\left[u_i(t)|\mathbb{S}(t)\right]=\mathbb{E}\left[u_i\right]=\mu_i^N \; .
\end{equation}
Substituting $\mu_i^N$ into the Lyapunov Drift gives the upper bound
\begin{align}\label{eq.drift_first_N}
\Delta(\mathbb{S}(t))\leq\frac{1}{N}\sum_{i=1}^N\hat\beta_i-\frac{1}{N}\sum_{i=1}^N\hat\beta_i\lambda_ip_ih_i(t)\mu_i^N \; . 
\end{align}

Now, taking the expectation with respect to $\mathbb{S}(t)$ and then the time-average on the interval $t\in\{1,2,\cdots,T\}$ yields
\begin{align}
\frac{\mathbb{E}\left[L(T+1)\right]}{T}-\frac{\mathbb{E}\left[L(1)\right]}{T}\leq \frac{1}{N}\sum_{i=1}^N\hat\beta_i-\frac{1}{TN}\sum_{t=1}^T\sum_{i=1}^N\hat\beta_i\lambda_ip_i\mathbb{E}\left[h_i(t) \right]\mu_i^N \; .
\end{align}
Manipulating this expression, assigning $\hat\beta_i=w_i/\lambda_ip_i\mu_i^N$ and taking the limit as $T\rightarrow\infty$, gives
\begin{align}\label{eq.MW_perf_N}
\mathbb{E}\left[J^{MW^N}\right]\leq&\frac{1}{N}\sum_{i=1}^N\frac{w_i}{\lambda_ip_i\mu_i^N} \; .
\end{align}

For deriving the upper bound in \eqref{eq.MW_bound_NoBuffer}, consider the Optimal Stationary Randomized policy $R^N$. Substituting $\mu_i^N$ into \eqref{eq.EWSAoI_randomized_NoBuffer} and then comparing with \eqref{eq.MW_perf_N} gives
\begin{align}\label{eq.MW_perf_3_N}
\mathbb{E}\left[J^{MW^N}\right]\leq&\mathbb{E}\left[J^{R^N}\right]\; .
\end{align}
\end{proof}

\end{document}